

\documentclass[preprint,11pt,xcolor=dvipsnames, a4wide]{article}




\usepackage{amsthm,amsmath,natbib, mathtools}
\usepackage{ifthen}
\usepackage{enumitem}
\usepackage{amssymb,dsfont,url,color,booktabs}
\usepackage[x11names]{xcolor}
 \usepackage{tikz}
 
 \usepackage{enumitem}
 \usepackage{geometry}
\usepackage{rotating}
 \geometry{a4paper,left=40mm,right=30mm, top=3cm, bottom=4cm} 



\usepackage{epstopdf}
\usepackage{placeins} 		
\usepackage{algorithm}
\usepackage[noend]{algpseudocode}
\usepackage{chngcntr}
\usepackage{bbm}
\usepackage{array}
\usepackage[export]{adjustbox}[2011/08/13] 
\usepackage{caption}
\usepackage{multirow}
\usepackage{todonotes}

\usepackage[hang]{subfigure} 
\usepackage{wrapfig} 
\usepackage{tikz}
\usetikzlibrary{arrows, patterns}
\tikzset{
  schraffiert/.style={pattern=horizontal lines,pattern color=#1},
  schraffiert/.default=black
}
\usepackage{todonotes}

\newcommand*{\IE}{\mathbb{E}}
\newcommand*{\IR}{\mathbb{R}}

\newcommand*{\IC}{\mathbb{C}}
\newcommand*{\IN}{\mathbb{N}}

\renewcommand*\d{\mathop{}\!\mathrm{d}}

\newcommand*{\conv}{\ast} 

\newcommand{\sgn}{\mathop{\mathrm{sgn}}}
\newcommand{\norm}[1]{{\left\Vert #1 \right\Vert}}


%
 
\theoremstyle{plain}
\newtheorem{definition}{Definition}[section]
\newtheorem{theorem}{Theorem}[section]
\newtheorem{lemma}[theorem]{Lemma}

\newtheorem{corollary}[theorem]{Corollary}
\newtheorem{defin}[theorem]{Definition}
\newtheorem{remark}[theorem]{Remark}
\newtheorem{example}[theorem]{Example}

\setlength{\arraycolsep}{2pt}  

\newcommand{\dd}[1]{\operatorname{d}\!#1}

\newcommand{\ee}[1]{\operatorname{e}^{#1}}

\newcommand{\OF}{\mathcal{F}}

\newcommand{\notiz}[1]{\relax}

\newcommand{\zitep}[1]{\relax}
\newcommand{\skr}{\rangle}
\newcommand{\1}{\mathds 1}            

\newcommand{\nn}{\mathds N}
\newcommand{\N}{\mathds N}
 
\newcommand{\OA}{\mathcal{A\,}}

\newcommand{\rr}{\mathds R}

\newcommand{\rrd}{\mathds{R}^d}

\newcommand{\skl}{\langle}

\newcommand{\Dt}{\Delta t}
\newcommand*{\lHopital}{\text{l'H\^{o}pital}}

\newcommand{\trans}{\top}

\newcommand{\Price}[1][]{
		\ifthenelse{\equal{#1}{}}{\mathit{Price}}{\Price{}^{#1}}
	} 

\usepackage{mathrsfs}

\newcommand{\spann}{\operatorname{span}}

\newcommand{\tild}{~}

\newlength{\wordlength}





\begin{document}





\title{\textbf{ A Flexible Galerkin Scheme for Option Pricing in  L\'evy Models}}

\author{Maximilian Ga{\ss} and Kathrin Glau\\\\
}
\date{\today\\\indent Technische Universit{\"a}t M{\"u}nchen, Center for Mathematics\\ \indent kathrin.glau@tum.de, maximilian.gass@mytum.de}
\maketitle



\begin{abstract}
One popular approach to option pricing in L{\'e}vy models is through solving the related partial integro differential equation (PIDE). For the numerical solution of such equations powerful Galerkin methods have been put forward e.g. by~\cite{HilberReichmannSchwabWinter2013}. As in practice large classes of models are maintained simultaneously, flexibility in the driving L{\'e}vy model is crucial for the implementation of these powerful tools. In this article we provide such a flexible finite element Galerkin method. To this end we exploit the Fourier representation of the infinitesimal generator, i.e.\ the related symbol, which is explicitly available for the most relevant L{\'e}vy models. Empirical studies for the Merton, NIG and CGMY model confirm the numerical feasibility of the method.
\end{abstract}
%

\noindent
\textbf{Keywords} L{\'e}vy processes, partial integro differential equations, pseudo-differential operators, symbol, option pricing, Galerkin approach, finite element method\\

\noindent
\textbf{Mathematics Subject Classification (2000)} 91G80, 60G51, 35S10, 65M60

\section{Introduction}

In computational finance, methods to solve partial differential equations come into play, when both run time and accuracy matters. In contrast to Monte Carlo for example, run-time is very appealing and a deterministic and conservative error analysis is established and well understood. 
And, compared to Fourier methods, the possibility to capture path-dependent features like early exercise and barriers is naturally built in. 
Within these appealing features lies the capacity to attract interest from academia and satisfy  the needs of the financial industry alike.

In academia, a series of publications by \cite{ContVoltchkova2005a}, Hilber, Reich, Schwab and Winter (2009)\nocite{HilberReichSchwabWinter2009}, Salmi, Toivanen and Sydow (2014)\nocite{SalmiToivanenSydow2014}, \cite{Itkin2015} and \cite{Glau2016} and the monograph of Hilber, Reichmann, Schwab and Winter (2013)\nocite{HilberReichmannSchwabWinter2013} have opened the theory to include even more sophisticated models of L\'evy type, resulting in Partial \textit{Integro} Differential Equations (PIDEs). The theoretical results have been validated by sophisticated numerical studies. In this context, Schwab and his working group in particular have taken the lead and unveiled the potential of PIDE theory for practical purposes in the financial industry. Combining state of the art compression techniques with a wavelet basis finite element setup has resulted in a numerical framework for option pricing in advanced and multivariate jump models and thereby moved academic boundaries.

In the financial industry an awareness of the full potential of these tools is yet to be developed. Advocating the advancement of numerical methods one must acknowledge what practice cherishes most in them. Due to model uncertainty and behavioural characteristics of different portfolios, financial institutions need to deal with a number of different pricing models in parallel. Or, in the words of \cite{FoellmerDMV}: "In any case, the signal towards the practitioners of risk management is clear: do not commit yourself to a single model, remain flexible, vary
the models in accordance with the problem at hand, always keeping in mind the worst case scenario."\footnote{Tranlsated from German.}
 These features need to be reflected in the numerical environment. 

In this article, we aim at reconciling capacities of state of the art P(I)DE 
tools with the flexibility regarding different model choices as required by industry. Desirable features that such an implementation must offer include
	\begin{enumerate}
		\item[(1)] a degree of accuracy that reaches levels relevant to practical applications and that can be measured and controlled by a theoretical error analysis,
		\item[(2)] fast run times, 
		\item[(3)] low and feasible implementational and maintenance cost,
		\item[(4)] a flexibility of the toolbox towards different options and models.
	\end{enumerate}
Two standard methods are available for solving P(I)DEs, that is the finite difference approach and the finite element method. More recently, also radial basis methods have been pushed forward to solve pricing PIDEs.
In principle these concepts can be implemented in such a way that they achieve the desired features 1.--3. and implementations for a variety of models and option types have already been developed: Finite difference schemes solving PIDEs for pricing European and barrier options with an implementation for Merton and Variance Gamma are provided in \cite{ContVoltchkova2005a} and \cite{ContVoltchkova2005b}. 
The method has been further developed in different directions, we mention one example,
 \cite{ItkinCarr2012}, who use a special representation of the equation to derive a finite different scheme for jump diffusions with jump intensity of tempered stable type.
Wavelet-Galerkin methods for PIDEs related to a class generalizing tempered stable L\'evy processes are derived in  \cite{MatacheNitscheSchwab2005} for American options and see e.g.\ \cite{MarazzinaReichmannSchwab2012} for a high-dimensional extension. Radial basis for the Merton and Kou model, American and European options are provided by \cite{ChanHubbert2014} and further developed for CGMY models by \cite{BrummelhuisChan2014}.

An implementation that is flexible in the driving model as well as in the option type first of all requires a problem formulation covering the collectivity of envisaged models and options. In view of feature (1), a unified approach to the error analysis of the resulting schemes is of equal importance. Galerkin methods, accruing from the Hilbert space formulation of the Kolmogorov equation, seem to be predestined to deliver the adequate level of abstraction for this task. It is this abstract level that makes Galerkin methods flexible in the option types and the dimensions of the underlying driving process. 
Consequently, even though Galerkin methods seems more involved at first glance in comparison to finite difference schemes, they still promise to lead to a more lucid code that is easier to maintain and to extend. Another fundamental advantage of Galerkin methods is its theoretical framework that allows for clear and extensive convergence analysis and error estimates, which is of great importance for controlling methodological risk in finance. 
The finite element, or more general Galerkin methods are therefore our methods of choice.


%
%

Unfortunately, even the finite element methods faces numerical challenges when implementing L\'evy model based pricing tools. More precisely, the L\'evy operator that determines the stiffness matrix is of integro differential type. First, the resulting matrix is densely populated and in general not symmetric. Second, and even more severe, the matrix entries typically are not explicitly available. Instead, they require the evaluation of double integral terms possibly  involving a numerically inaccessible L\'evy measure. In these cases, a thorough analysis of the respective integrals may lead to approximation schemes deriving the stiffness matrix entries with the required precision. Pursuing this way, however, most likely results in a model specific scheme, contradicting requirement (4).

In this article, our aim is \emph{the development of a model-independent approach to set up a FEM solver for option pricing in L\'evy models} that we call \emph{symbol method}. We address this goal by expressing the operator in the Fourier space. This means accessing the model specific information via the symbol. In contrast to the operator, the symbol is explicitly available for a variety of models and is thus numerically accessible. Further advantages will be highlighted in subsequent sections.

Section \ref{sec-framework} introduces the theoretical framework for our PIDEs of interest and their weak formulation. The next section describes the solution scheme, that is the Galerkin approximation in space. We investigate the scheme  with regard to the numerical challenges arising during its implementation. Section \ref{sec-symbolmethod} introduces the symbol method itself. All components of the FEM solver are expressed in Fourier space. The subsequent numerical evaluation of the stiffness matrix entries is supported by an elementary approximation result. Several examples of symbols for well-known L\'evy models confirm the wide applicability of the method and its numerical advantages. Two proposals for the implementation of basis functions are presented. The numerical studies in Section \ref{sec-num} confirm theoretical prescribed rates of convergence and validate the claim of numerical feasibility.

%
%
%
%
%
%

\section{Kolmogorov equations for option pricing in L\'evy models}\label{sec-framework}
We first introduce the underlying stochastic processes, the Kolmogorov equation, its weak formulation as well as the solution spaces of our choice. 
\subsection{L\'evy processes} 

Let a stochastic basis $(\Omega,\OF, (\OF_t)_{0\le t\le T}, P)$ be given and let
$L$ be an $\rrd$-valued \emph{L\'evy process} with characteristics $(b,\sigma,F;h)$, i.e.\ for fixed $t\ge0$ its characteristic function is given by
\begin{align}\label{eq-charPIIAC}
E\ee{i \skl \xi, L_t \skr} = \ee{ -t A(-i\xi)\dd s}\quad \text{for every }\xi\in\rrd,
\end{align}
where the \emph{symbol of the process }is defined as
\begin{equation}\label{def-A}
A(\xi) := \frac{1}{2}\langle \xi,\sigma \xi\rangle + i\langle \xi,b\rangle 
- \int_{\rrd}\left(\ee{-i\langle \xi,y\rangle} -1+ i\langle \xi,h(y)\rangle\right)\,F(\dd y).
\end{equation}
Here, $\sigma$ is a symmetric, positive semi-definite $d\times d$-matrix, $b\in \rr^d$, and $F$ is a L\'evy measure, i.e. a positive Borel measure on $\rrd$ with $F(\{0\})=0$ and $\int_{\rr^d} (|x|^2 \wedge 1) F(\dd x)  < \infty$. Moreover, $h$ is a truncation function i.e. $h:\rr^d\to\rr$ such that $\int_{\{|x|>1\}} h(x) F(\dd x)<\infty$ with $h(x)=x$ in a neighbourhood of $0$.
The  \emph{Kolmogorov operator of a L\'evy process} $L$ with characteristics $(b,\sigma,F;h)$ is given by
\begin{align}\label{def-opA}
\begin{split}
\OA \varphi(x)\coloneqq & - \frac{1}{2}\sum_{j,k=1}^d \sigma^{j,k}\frac{\partial^2 \varphi}{\partial x_j\partial x_k}(x)-\sum_{j=1}^d b^j\frac{\partial \varphi}{\partial x_j}(x)\\
&-\int_{\rr^d}\Big( \varphi(x+y)-\varphi(x)-\sum_{j=1}^d\frac{\partial \varphi}{\partial x_j}(x) \,  h_j(y)\Big)F(\dd y)
\end{split}
\end{align}
for every $\varphi\in C^\infty_0(\rrd)$, where $h_j$ denotes the $j$-th component of the truncation function $h$.

\subsection{Kolmogorov equation in variational form}
Key for the variational formulation of
Kolmogorov equation
\begin{align}\label{parabolic-eq-origin}
\partial_t u + \OA u &= f\\
u(0)&=g\label{parabolic-eq-initial}
\end{align}
is the definition of the bilinear form 
\begin{equation}\label{def-aCinfty0}
a(\varphi,\psi) := \int_{\rrd} (\OA \varphi)(x)\psi(x)\dd x\qquad \text{for all }\varphi,\psi\in C^\infty_0(\rrd).
\end{equation}
It is one of the major advantages of variational formulations of evolution equations that solution spaces of low regularity, as compared to the space $C^2$ for example, are incorporated in an elegant way.
Departing from the space $C^\infty_0(\rrd)$ of smooth functions with compact support, we can select from a large variety of function spaces $V$ that is characterized by the following assumption.
\begin{enumerate}
\item[(A1)] $V$ and $H$ are Hilbert spaces such that $C^\infty_0(\rrd)$ lies dense in $V$ and there exists a continuous embedding from $V$ into $H$. 
\end{enumerate}
Existence and uniqueness of a variational solution critically hinges on the following two properties of the bilinear form:
\begin{enumerate}
\item[(A2)]\textit{Continuity}: There exists a constant $C>0$ such that
\begin{equation*}
\big|a(\varphi,\psi) \big|\le C \|\varphi\|_V\|\psi\|_V\qquad \text{for all }\varphi,\psi\in C^\infty_0(\rrd).
\end{equation*}
\item[(A3)]\textit{G{\aa}rding inequality}: There exists constants $G>0$ and $G'\ge0$ such that
\begin{equation*}
a(\varphi,\varphi) \ge G\|\varphi\|_V^2 - G'\|\varphi\|_H^2\qquad \text{for all }\varphi\in C^\infty_0(\rrd).
\end{equation*}
\end{enumerate}
We observe that due to (A1) and (A2), the bilinear form $a$ possesses a unique continuous bilinear extension $a:V\times V$ that is continuous, i.e. for a constant $C>0$ we have  $\big|a(\varphi,\psi) \big|\le C \|\varphi\|_V\|\psi\|_V$ for all $v\in V$. Also (A3) holds for all $v\in V$.

As $V$ is separable, this is also true for $H$ and one can find  a continuous embedding from $H$ to the dual space $V^\ast$ of $V$, i.e. $(V,H,V^\ast)$ is a Gelfand triplet. We then denote by $L^2\big(0,T; H\big)$ the space of weakly measurable functions $u:[0,T]\to H$ with $\int_0^T\|u(t)\|_H^2 \dd t < \infty$ and by $\partial_t u$ the derivative of $u$  with respect to time in the distributional sense. For a detailed definition which relies on the Bochner integral we refer to Section 24.2 in \cite{Wloka-english}.
The Sobolev space 
\begin{equation}\label{def-W1}
W^1( 0,T; V,H) := \Big\{ u\in L^2\big(0,T;V\big) \,\Big| \,\partial_t u\in L^2\big(0,T; V^\ast\big) \Big\},
\end{equation}
will play the role of the solution space in the variational formulation of Kolmogorov equation \eqref{parabolic-eq-origin}, \eqref{parabolic-eq-initial}.

\begin{defin}
Let $f\in L^2\big(0,T; V^\ast\big)$ and $g\in H$. Then 
$u\in W^1( 0,T; V,H)$ is a \emph{variational solution} of Kolmogorov equation \eqref{parabolic-eq-origin}, if for almost every $t\in(0,T)$, 
\begin{equation}\label{def-para}
 \skl \partial_t u(t), v\skr_{H} + a( u(t), v)  =\, \skl f(t) | v\skr_{V^\ast\times V}\quad \text{for all }v\in V
\end{equation}
and $u(t)$ converges to $g$ for $t\downarrow0$ in the norm of $H$.
\end{defin}
\begin{remark}
Assumptions (A1)--(A3) guarantee the existence and uniqueness of a variational solution $u\in W^1( 0,T; V,H)$ of \eqref{def-para},
see for instance 
 Theorem~23.A in \cite{Zeidler}.
\end{remark}

\subsection{Solution spaces}
Definition \ref{def-aCinfty0} is based on the $L^2$-scalar product and is appropriate for variational equations in Sobolev spaces. Then, typically $H=L^2$. For Kolmogorov equations for option prices the initial condition $g$ in \eqref{parabolic-eq-initial} plays the role of the (logarithmically transformed) payoff function of the option. For a call option with strike $K$ it is of the form $x\mapsto(S_0\ee{x} - K)^+$, for a digital up and out option  it is given by $x\mapsto\1_{x<b}$ with for some $b\in\rr$.
We thus have to observe that the initial condition $g$ is not square integrable for most of the typical cases of interest. 
Therefore, we base our analysis more generally on exponentially weighted $L^2$ spaces: For $\eta\in\rrd$ let
\begin{equation*}
L^2_\eta(\rrd) := \big\{u\in L^1_{loc}(\rrd)\,|\, u \ee{\skl\eta,\cdot\skr}\in L^2(\rrd)\big\},\quad \|u\|_{L^2_\eta}:=\bigg(\int_{\rrd} \big|u(x)\big|^2\ee{2\skl\eta,x\skr}\dd x\bigg)^{1/2}
\end{equation*}
and
\begin{equation}\label{def-aCinfty0-eta}
a(\varphi,\psi) := \skl \OA\varphi,\psi\skr_{L^2_\eta}=\int_{\rrd} (\OA \varphi)(x)\psi(x) \ee{2\skl\eta,x\skr}\dd x\quad \text{for all }\varphi,\psi\in C^\infty_0(\rrd).
\end{equation}
We notice that all assertions of the precedent section, concerning assumptions (A)--(A3) and variational equations hold for bilinear form $a$ defined by \eqref{def-aCinfty0-eta} instead of $a$ from \eqref{def-aCinfty0} as well.

As solution spaces $V$ we consider weighted Sobolev-Slobodeckii spaces. These have proven to apply to a large set of option types and models. We refer to \cite{EberleinGlau2013} and \cite{Glau2016},
where particularly Feynman-Kac type formulas have been derived linking European and path-dependent options to weak solutions of Kolmogorov equations in Sobolev-Slobodeckii spaces.

To introduce the spaces, we denote by $C_0^\infty(\rrd)$ the set of smooth real-valued functions with compact support in $\rrd$ and let
\begin{equation}\label{def_FT}
\OF(\varphi)(\xi):= \int_{\rrd}\ee{i\skl \xi,x\skr} \varphi(x) \dd x
\end{equation}
be the Fourier transform of $\varphi\in C^\infty_0(\rrd)$ and $\OF^{-1}$ be its inverse. 
We define the \emph{exponentially weighted Sobolev-Slobodeckii space} $H^\alpha_\eta(\rrd)$ with index $\alpha\ge0$ and weight $\eta\in \rrd$ as the completion of $C_0^\infty(\rrd)$ with respect to the norm $\|\cdot\|_{H^\alpha_\eta}$ given by
\begin{equation}\label{def_normHseta}
\|\varphi\|_{H^\alpha_\eta}^2:= \int_{\rrd} \big(1+|\xi|\big)^{2\alpha}\big|\OF(\varphi)(\xi - i \eta)\big|^2 \dd \xi.
\end{equation}
Furthermore, we denote the dual space of $H^\alpha_\eta(\rrd)$ by $\big(H^\alpha_\eta(\rrd))^\ast$.

\section{\textbf{Implementational Challenges}}

\subsection{Abstract Galerkin approximation in space}

For a countable Riesz basis $\{\varphi_1,\varphi_2,\ldots\}$ of $V$ we define
\begin{equation*}
 V_N:=\operatorname{span}\{\varphi_1,\ldots,\varphi_N\}\qquad\text{for all }N\in\N.
\end{equation*}
Since $V$ is dense in $H$, we may further choose $g_N$ in\tild $V_N$ such that 
$g_N \rightarrow u(0)
$ in\tild$H$.
For each fixed $N\in\nn$ the semidiscrete problem is defined by restricting \eqref{def-para} to the finite dimensional space:  
\textit{
Find a function $v_N \in  W^1(0,T; V_N; H\cap V_N )$ that satisfies for all $\chi \in C^\infty_0(0,T)$ and $\varphi \in V_N$,
\begin{equation}\label{gl-variation_V_n}
\begin{split}
-\int_0^T\!\! \skl v_N(t), \varphi\skr_{L^{2}} \,\dot{\chi}(t) \dd t + \int_0^T\!\! a\big(v_N(t), \varphi\big) \,\chi(t) \dd t 
&=
\int_0^T \skl f(t)|\varphi\skr_{V^\ast\times V} \, \chi(t) \dd t
 \\
v_N(0) &= g_N.
\end{split}
\end{equation}}
As a result of the elegant Hilbert space formulation, the semidiscrete problem \eqref{gl-variation_V_n} is uniquely solvable and the convergence of the sequence $v_N$ to $v$ is guaranteed, see Theorem\tild 23.A. and Remark 23.25 in \cite{Zeidler}.

The major advantage of equation \eqref{gl-variation_V_n} in regard to implementation is that 
 it suffices to insert the basis functions as test functions.
Thus, denoting $g_N = \sum_{k=1}^N \alpha_{k} \varphi_k$
and $v_N(t)\coloneqq \sum_{k=1}^N V_k (t) w_k$ we arrive at
\begin{align*}
\sum_{k=1}^N \dot{V}_{k}(t) \skl \varphi_k,\varphi_j\skr_{L^{2}} + \sum_{k=1}^N V_{k}(t) a\big(\varphi_k,\varphi_j \big) 
&= \skl f(t)|\varphi_j\skr_{V^\ast\times V}\\
V_k(0)&= \alpha_k\quad\text{for all }k=1,\ldots,N.
\end{align*}
Written in matrix form the problem is to find $V:[0,T] \rightarrow \rr^{N}$ such that
\begin{align}\label{Pide-matrixform}
M \dot{V}(t) + A V(t) &= F(t) \\
V(0) &=\alpha,\label{Pide-matrixform-initial}
\end{align}
where $F = (F_1, \ldots, F_N)^\trans$ with $F_k(t) = \skl f(t)|\varphi_k\skr_{V^\ast\times V}$ for $k=1,\ldots,N$, $\alpha =(\alpha_1,\ldots,\alpha_N)^\trans$, and the \emph{mass matrix} $M$ and \textit{stiffness matrix} $A$ are given\tild by
\begin{equation}
M_{jk} =  \skl \varphi_k,\varphi_j\skr_{L^{2}} ,\qquad A_{jk} = a\big(\varphi_k,\varphi_j \big)\qquad \text{for all }j,k=1,\ldots,N.
\end{equation}


\subsection{Flexible implementation for different driving L\'evy processes}
\label{sec:FlexibleImplementation}
We inspect equation \eqref{Pide-matrixform-initial} in regard to 
flexibility towards different options as well as models. All ingredients in \eqref{Pide-matrixform} depend on the choice of the basis. While $M$ is independent of the specific problem at hand, $F$ and $\alpha$ represent the input data and therefore may vary for different option types. The stiffness matrix $A$ that carries the information of the driving driving process. So in order to obtain flexibility towards model types, we need a generic way to compute the entries of the stiffness matrix. For smooth basis functions with compact support and solution spaces without weighting, i.e.\ $\eta=0$, according to \eqref{def-A}, \eqref{def-aCinfty0}, the stiffness matrix entries are given  by
\begin{align}
a\big(\varphi_k,\varphi_j \big) 
= &-\sum_{l,m=1}^d  \frac{\sigma^{j,m}}{2}\int_{\rrd} \big( \partial_l\partial_m \varphi_k(x)\big) \varphi_j(x)\dd x
-\sum_{l=1}^d b^l\int_{\rrd} \partial_l \varphi_k(x)\varphi_j(x)\dd x\nonumber\\
&-\int_{\rr^d}\int_{\rr^d}\Big( \varphi_k(x+y)-\varphi_k(x)-\sum_{l=1}^d\partial_l \varphi_k(x) \,  h_l(y)\Big)F(\dd y)\varphi_j(x)\dd x.\label{eq-ajump}
\end{align}
Typical basis functions are not smooth. Therefore it is not a priori clear if the integral representation \eqref{eq-ajump} extends to the usual basis functions. Observe hat an extension of this representation requires some care: For a large and important class of pure jump L\'evy processes, the solution spaces are Sobolev-Slobodeckii spaces of fractional order, i.e.\
$H^\alpha$ with some $0<\alpha<1$. For functions in $H^\alpha$ with $\alpha<1$, however, the first order weak derivative in \eqref{eq-ajump} is not defined and therewith this integral representation of the bilinear form is not well-defined. Seen that the basis functions usually are in $H^1$, also in the more challenging case of solution spaces with fractional order derivatives we derive the validity of the representation under appropriate assumptions.

\begin{lemma}\label{lem-a=forH1}
Let $d=1$.
Let $a$ be defined by \eqref{def-aCinfty0-eta}. Assume (A1)--(A3) for $a$, $V$ and $H$ and denote by $a:V\times V$ its unique bilinear continuous extension. If $H^1_\eta(\rr)\subset V$, we have for every $\varphi,\psi\in H^1_\eta(\rr)$,
\begin{align}
a(\varphi,\psi ) 
= &\frac{\sigma}{2} \int_{\rr} \varphi'(x)\psi'(x) \ee{2 \eta x} \dd x
- b(\eta, \sigma,F) \int_{\rr} \varphi'(x)\psi(x) \ee{2\eta x}\dd x\nonumber\\
&-\int_{\rr}\int_{|y|<1} \int_0^y\int_0^z\varphi'(x+v)\dd v \dd zF(\dd y)\big(\psi'(x) + 2\eta \psi(x) \big)\ee{2\skl\eta,x\skr}\dd x
\label{eq-aH1}
\\
&-\int_{\rr}\int_{|y|>1}\big( \varphi(x+y)-\varphi(x)\big)F(\dd y)\psi(x) \ee{2\skl\eta,x\skr}\dd x\nonumber
\end{align}
with
\begin{equation*}
b(\eta, \sigma,F) =  b - 2 \sigma \eta + \int_{|y|<1}\big( y-h(y)\big)F(\dd y) - \int_{|y|>1}h(y)F(\dd y).
\end{equation*}
\end{lemma}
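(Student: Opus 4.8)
\emph{Proof plan.} The plan is to verify \eqref{eq-aH1} first for $\varphi,\psi\in C^\infty_0(\rr)$ by a direct computation from \eqref{def-aCinfty0-eta} and the explicit form \eqref{def-opA} of $\OA$, and then to pass to $H^1_\eta(\rr)$ by density. The density step uses that $C^\infty_0(\rr)$ is by definition dense in $H^1_\eta(\rr)$ and that $H^1_\eta(\rr)\subset V$ with continuous embedding: hence for $\varphi,\psi\in H^1_\eta(\rr)$ one may choose $\varphi_n,\psi_n\in C^\infty_0(\rr)$ with $\varphi_n\to\varphi$, $\psi_n\to\psi$ in $\|\cdot\|_{H^1_\eta}$, and these converge in $V$ as well, so that $a(\varphi_n,\psi_n)\to a(\varphi,\psi)$ by the continuity of the extension $a\colon V\times V\to\rr$ granted by (A1)--(A2).

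\emph{The identity on $C^\infty_0(\rr)$.} Let $\varphi,\psi\in C^\infty_0(\rr)$. I split $\OA\varphi$ into diffusion, drift and jump part and integrate each against $\psi\,\ee{2\eta x}$. For $-\tfrac{\sigma}{2}\varphi''$, one integration by parts (boundary terms vanish by compact support), using $(\psi\,\ee{2\eta x})'=(\psi'+2\eta\psi)\ee{2\eta x}$, yields $\tfrac{\sigma}{2}\int_\rr\varphi'\psi'\,\ee{2\eta x}\dd x$ plus a first-order contribution proportional to $\int_\rr\varphi'\psi\,\ee{2\eta x}\dd x$. In the jump part I split the $F$-integral at $|y|=1$. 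On $\{|y|>1\}$, finiteness of $\int_{|y|>1}h(y)F(\dd y)$ detaches the term $\varphi'(x)h(y)$ (another first-order contribution) and leaves the last line of \eqref{eq-aH1}. On $\{|y|<1\}$ I write $\varphi(x+y)-\varphi(x)-\varphi'(x)h(y)=\big[\varphi(x+y)-\varphi(x)-\varphi'(x)y\big]+\varphi'(x)(y-h(y))$; since $h(y)=y$ near $0$ and $F$ is finite away from $0$, $\int_{|y|<1}|y-h(y)|F(\dd y)<\infty$ and the second summand is again first order. For the bracket I use Taylor's formula with integral remainder, $\varphi(x+y)-\varphi(x)-\varphi'(x)y=\int_0^y\int_0^z\varphi''(x+v)\dd v\dd z=\partial_x\!\big(\int_0^y\int_0^z\varphi'(x+v)\dd v\dd z\big)$, interchange the $x$-, $F(\dd y)$- and $(v,z)$-integrals (admissible since $\varphi,\psi$ and the inner double integral are compactly supported and $\int_{|y|<1}|y|^2F(\dd y)<\infty$), and integrate by parts in $x$ to move $\partial_x$ onto $\psi\,\ee{2\eta x}$; this produces the double-integral term of \eqref{eq-aH1}. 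Collecting all first-order contributions from the diffusion and the small- and large-jump parts gives the coefficient $-b(\eta,\sigma,F)$ with $b(\eta,\sigma,F)$ as stated.

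\emph{Passing to $H^1_\eta(\rr)$.} Write $\tilde a$ for the bilinear form on $H^1_\eta(\rr)\times H^1_\eta(\rr)$ given by the right-hand side of \eqref{eq-aH1}; I claim each of its terms is finite and that $|\tilde a(\varphi,\psi)|\le C\|\varphi\|_{H^1_\eta}\|\psi\|_{H^1_\eta}$ for $\varphi,\psi\in H^1_\eta(\rr)$. Granting this, $a$ and $\tilde a$ are continuous bilinear forms on $H^1_\eta(\rr)\times H^1_\eta(\rr)$ agreeing on the dense set $C^\infty_0(\rr)\times C^\infty_0(\rr)$ by the previous step, hence they coincide, which is the assertion. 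For the claim: the diffusion and the two first-order terms are bounded using $\|\varphi'\|_{L^2_\eta}+|\eta|\,\|\varphi\|_{L^2_\eta}\le C\|\varphi\|_{H^1_\eta}$. In the large-jump term, $\int_{|y|>1}F(\dd y)<\infty$ controls the $\varphi(x)$-part, whereas for the $\varphi(x+y)$-part the substitution $u=x+y$ gives $\|\varphi(\cdot+y)\|_{L^2_\eta}=\ee{-\eta y}\|\varphi\|_{L^2_\eta}$, so one additionally needs the exponential moment $\int_{|y|>1}\ee{-\eta y}F(\dd y)<\infty$, which is part of the integrability implicit in the choice of the weighted space (and is forced by $H^1_\eta(\rr)\subset V$ together with (A2)). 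The delicate term is the small-jump double integral: Cauchy--Schwarz gives, say for $y>0$ (the case $y<0$ is analogous), $\big|\int_0^y\int_0^z\varphi'(x+v)\dd v\dd z\big|\le y^{3/2}\big(\int_0^{y}|\varphi'(x+v)|^2\dd v\big)^{1/2}$, so Fubini with $u=x+v$ yields $\big\|\int_0^y\int_0^z\varphi'(\cdot+v)\dd v\dd z\big\|_{L^2_\eta}\le\ee{|\eta|}\,|y|^{2}\,\|\varphi'\|_{L^2_\eta}$ uniformly for $|y|<1$; integrating against $F$ via $\int_{|y|<1}|y|^2F(\dd y)<\infty$ and pairing with $\|\psi'+2\eta\psi\|_{L^2_\eta}\le C\|\psi\|_{H^1_\eta}$ gives the claimed bound.

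\emph{Main difficulty.} The crux is the small-jump term in the last step, together with the Fubini and integration-by-parts bookkeeping in the second step that generates it: one must estimate an iterated integral of $\varphi'$ in the weighted $L^2_\eta$-norm uniformly in the jump size $y$, and it is precisely this — rather than any bound on $\varphi''$ — that makes \eqref{eq-aH1} meaningful for $\varphi\in H^1_\eta(\rr)$. A secondary point is that the large-jump term imposes the model-dependent exponential integrability of $F$ in the direction $\eta$, which has to be tracked to keep $\tilde a$ finite on the non-compactly-supported elements of $H^1_\eta(\rr)$.
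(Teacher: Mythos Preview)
Your proof follows essentially the same strategy as the paper's: verify the identity on $C^\infty_0(\rr)$ via Taylor's integral remainder for the small-jump part and integration by parts, then extend by density after establishing $H^1_\eta$-continuity of the right-hand side. Your treatment is in fact slightly more careful in one place: you correctly flag the exponential moment $\int_{|y|>1}\ee{-\eta y}F(\dd y)<\infty$ needed to bound the large-jump term on $H^1_\eta(\rr)$, whereas the paper's estimate there tacitly treats the unweighted case.
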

The proof is provided in Section \ref{sec-proof-lem-a=}.

%
%

Inspecting the expression for the bilinear form we encounter several numerical challenges due to the integral part---stemming from the jumps of the process:
\begin{enumerate}
\item 
The appealing tridiagonal structure of the stiffness matrix related to the Black-Scholes equation does not extend to the general L\'evy setting. Instead, the stiffness matrix is densely populated. Pleasantly, it is still a Toeplitz matrix.

\item 
For some choices of L\'evy measures and bases the stiffness matrix entries may be derived in closed form. This is for instance the case for the Merton model and piecewise linear basis functions. Following Section 10.6.2 in \cite{HilberReichmannSchwabWinter2013}, the stiffness matrix entries may be derived in semi-closed form expressions for a further group of jump intensities including tempered stable, CGMY and KoBoL processes and the choice of piecewise linear basis functions.
\end{enumerate}

%

An implementation that is flexible in the driving L\'evy process therefore has to rely on numerical approximations of the entries of the stiffness matrix. These approximations inevitably affect the accuracy of the solution to the scheme \eqref{Pide-matrixform}--\eqref{Pide-matrixform-initial}. The question rises: 
 \textit{How accurate does the integration routine have to be chosen in order to meet a desired accuracy of the solution $V$?}
In order to gain a first practical insight in the magnitude of the error resulting from an inaccuracy in the stiffness matrix entries, we next conduct an empirical investigation.


\subsection{An accuracy study for approximations of the stiffness matrix}
\label{sec:AccuracyStudy}

To this end we conduct an accuracy study for the stiffness matrix entries simulating the propagation of integration errors in the stiffness matrix up until computed option prices. We choose the Black-Scholes model for which the precise values of all matrix entries are known and assume a market volatility of $\sigma=0.2$ and interest rate $r=0.01$ for this study. With a classic FEM solver we price a put option with strike $K=1$ and maturities $T\in[0,3]$ for current values of the stock $S_0\in [S_\text{min}, S_\text{max}]$ with $S_\text{min}=0.01$ and $S_\text{max} = 10$. We set the number of involved FEM hat functions to $N=150$, resulting in a mesh with $150$ inner grid nodes and mesh fineness $h=0.0464$. We know the mass matrix of the Black-Scholes model to be given by
	\begin{align*}
		M &= \frac{h}{6} \begin{pmatrix}
												4 & 1 & 0 & \cdots & 0 \\
												1 & 4 & 1 & \ddots & \vdots \\
												0 & \ddots & \ddots & \ddots & 0 \\
												\vdots & \ddots & 1 & 4 & 1\\
												0 & \cdots & 0 & 1 & 4\\
 											\end{pmatrix}\in\IR^{N\times N},
	\end{align*}
and the stiffness matrix to be given by
	\begin{equation}
	\label{eq:FEMdistStiffnessOrg}
	 		A = A^\text{bs} = A^{(1)} + A^{(2)} + rM \in\IR^{N\times N}, 
	\end{equation}
where
	\begin{equation*}
			A^{(1)}\!=\!\frac{1}{2}\!\left(\!r-\frac{\sigma^2}{2}\right)\!\begin{pmatrix}
												0 & -1 & 0 & \cdots & 0 \\
												1 & 0 & -1 & \ddots & \vdots \\
												0 & \ddots & \ddots & \ddots & 0 \\
												\vdots & \ddots & 1 & 0 & -1\\
												0 & \cdots & 0 & 1 & 0\\
 											\end{pmatrix},\quad
 			A^{(2)}\!=\!\frac{\sigma^2}{2}\frac{1}{h} \begin{pmatrix}
												2 & -1 & 0 & \cdots & 0 \\
												-1 & 2 & -1 & \ddots & \vdots \\
												0 & \ddots & \ddots & \ddots & 0 \\
												\vdots & \ddots & -1 & 2 & -1\\
												0 & \cdots & 0 & -1 & 2\\
 											\end{pmatrix}.
	\end{equation*}
With these matrices we set up a theta scheme, $\theta = 0.5$, and derive Black-Scholes put option prices. 
We use these matrices to simulate how the resulting pricing surface is affected by inaccuracies that might occur when these integrals are solved numerically, instead. To this extent we take the correct stiffness matrix given by~\eqref{eq:FEMdistStiffnessOrg} and distort each of its entries randomly at different positions $D\in\IN$ after the decimal point by adding $\varepsilon_i^D = 10^{-(D-1)}\varepsilon_i$ with random $\varepsilon_i\in(-1,1)$ for $i\in\{-(N-1),\dots,-1,0,1,\dots,(N-1)\}$ onto the (side) diagonal $i$ of Matrix $A$. Each individual (side) diagonal of the original stiffness matrix is thus affected evenly, keeping the Toeplitz structure of the matrix intact. Since the value of $A_{ij}$ is only determined by the value of $j-i$, this distortion mimics the influence that integration inaccuracies would have.

\begin{figure}[h!]
\begin{center}
\makebox[0pt]{\includegraphics[scale=1]{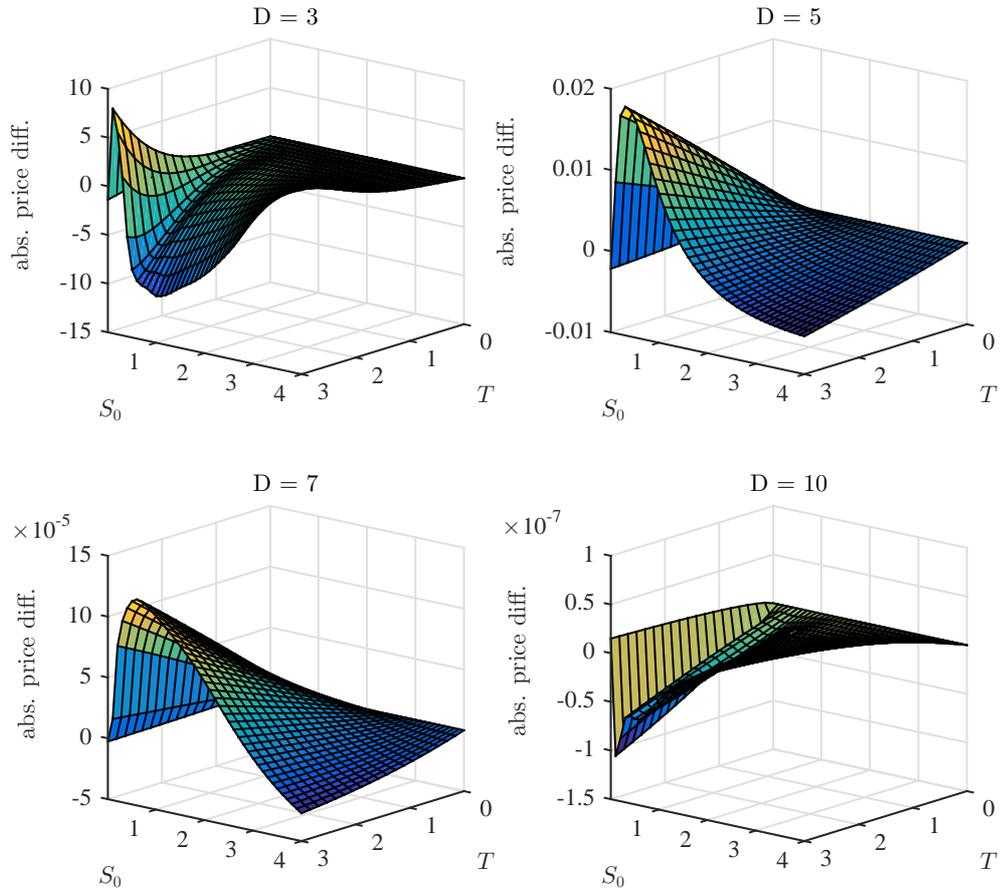}}
\caption{Absolute price differences resulting from a distortion of the stiffness matrix $A$. True and distorted prices describe the market value of a put option in the Black-Scholes model with the parametrization from Section \ref{sec:AccuracyStudy}. We compare the price surfaces coming from a theta scheme using the stiffness matrix $A$ given by~\eqref{eq:FEMdistStiffnessOrg} to the respective pricing surface when $A$ is replaced by $A^D_\text{distort}$, the distorted version of $A$ as defined in~\eqref{eq:FEMdistStiffness01}, for different values of $D\in\IN$. The influence of the distortion decreases in $D$.}
\label{fig:FEMOptionPriceDiffs4}
\end{center}
\end{figure}

Hence, for $D\in\IN$ we define the distorted stiffness matrix by
	\begin{equation}
	\label{eq:FEMdistStiffness01}
		A^D_\text{distort} = A + \varepsilon^D \in \IR^{N\times N},
	\end{equation}
with
	\begin{equation*}
		\varepsilon^D = 10^{-(D-1)}\begin{pmatrix}
						\varepsilon_{0} & \varepsilon_{1} & \varepsilon_{2} & \cdots & \cdots & \cdots & \varepsilon_{N-1} \\
						\varepsilon_{-1} & \varepsilon_{0} & \varepsilon_{1} & \ddots & \ddots & \ddots & \vdots \\
						\varepsilon_{-2} & \varepsilon_{-1} & \ddots & \ddots & \ddots &\ddots & \vdots \\
						\vdots & \ddots & \ddots & \ddots & \ddots & \ddots & \vdots \\						
						\vdots & \ddots & \ddots & \ddots & \ddots & \varepsilon_{1} & \varepsilon_{2}\\
						\vdots & \ddots & \ddots & \ddots & \varepsilon_{-1} & \varepsilon_{0} & \varepsilon_{1}\\			
						\varepsilon_{-(N-1)} & \cdots & \cdots & \cdots & \varepsilon_{-2} & \varepsilon_{-1} & \varepsilon_{0}\\
 					\end{pmatrix} \in \IR^{N\times N},
	\end{equation*}
with uniformly distributed $\varepsilon_i \in (-1,1)$, $i\in\{-(N-1),\dots,-1,0,1,\dots,(N-1)\}$, that are drawn independently from each other. Using these distorted stiffness matrices $A^D_\text{distort}$ for different values $D\in\IN$, we derive again price surfaces of the put option in the Black-Scholes model and compare the difference between the prices coming from the distorted stiffness matrix $A^D_\text{distort}\in\IR^{N\times N}$ to the prices from the intact stiffness matrix $A\in\IR^{N\times N}$.

Figure~\ref{fig:FEMOptionPriceDiffs4} shows the results of the study and emphasizes the necessity for high numerical accuracy in the computation of the stiffness matrix. We observe that the absolute price differences decrease almost linearly in $D$. An accuracy of $D=3$ corresponds to integration results that are exact up to the third digit after the decimal point. Pricing resulting from stiffness matrices computed with such a low integration accuracy are unacceptable. The respective pricing errors observable in the top left corner of Figure~\ref{fig:FEMOptionPriceDiffs4} indicate relative errors of several hundred percent points. With more precise integration results, the error decreases in $D$ until highly appealing pricing results are achieved for $D=7$ and beyond. The magnitude of the pricing error resulting from a distorted stiffness matrix emphasizes the necessity of being able to derive the stiffness matrix entries as accurately as possible. 

\FloatBarrier

\section{Fourier approach to the Kolmogorov equation}\label{sec-symbolmethod}

In regard to the high accuracy the approximation of the stiffness matrix entries needs to achieve, we would like to avoid numerical evaluations of the stiffness matrix entries on the basis of representation \eqref{eq-aH1}. Seeking for alternative representations of the stiffness matrix, let us point out that the symbol $A$ of the L\'evy process is always available. Even more, it is an explicit function of the parameter of the process and thus can be seen as the modelling quantity of the process as the examples \ref{ex:BSsymbol}--\ref{ex:NIGsymbol} below show. We therefore take a Fourier perspective on the variational formulation of the Kolmogorov equation. This is especially promising since the Kolmogorov operator $\OA$ of a L\'evy process is a pseudo differential operator with with symbol~$A$, 
\begin{equation}\label{eq-Aispseudo}
\OA \varphi=\OF^{-1}(A \OF(\varphi))\qquad\text{for all }\varphi\in C^\infty_0(\rrd),
\end{equation}
as elementary manipulations show.
Now Parseval's identity yields 
\begin{equation}
a(\varphi,\psi) = \frac{1}{(2\pi)^d}\int_{\rrd} \OF(\OA \varphi)(\xi)\overline{\OF(\psi)(\xi)}\dd \xi
\end{equation}
for all $\varphi,\psi\in C^\infty_0(\rrd)$, respectively,
\begin{equation}
a(\varphi,\psi) =\frac{1}{(2\pi)^d}\int_{\rrd}A(\xi) \OF(\varphi)(\xi)\overline{\OF(\psi)(\xi)}\dd\xi.
\end{equation}
This well-known identity has already proven highly beneficial for the analysis of the variational solutions of the Komogorov equations, compare e.g.\ \cite{HilberReichmannSchwabWinter2013}, \cite{Glau2013} and \cite{Glau2016}. Here we exploit this representation for the numerical implementation.

\begin{lemma}[Continuous extension of bilinear forms]
\label{lem:SymbolMethod}
Let $A$ be the symbol of a L{\'e}vy process given by the characteristic triplet $(b,\sigma,F)$. Denote by $\mathcal{A}:C_0^\infty(\IR^d,\IC)\rightarrow C^\infty(\IR^d,\IC)$ the pseudodifferential operator associated with symbol $A$. Furthermore, denote by \mbox{$a:C_0^\infty\times C_0^\infty\rightarrow\IC$} the bilinear form associated with the operator $\mathcal{A}$. Let $\eta\in\IR^d$. If
	\begin{enumerate}[label=\roman*)]
		\item the exponential moment condition
			\begin{equation}
				\int_{|x|>1}e^{-\langle \eta',x\rangle}F(\d{x}) < \infty
			\end{equation}
			holds for all $\eta'\in \sgn(\eta^1)[0,|\eta^1|]\times\cdots\times \sgn(\eta^d)[0,|\eta^d|]$ and
		\item there exists a constant $C_1>0$ with
			\begin{equation}
				|A(z)| \leq C_1(1+\norm{z})^\alpha
			\end{equation}
			for all $z\in U_{-\eta}$ where
			\begin{equation}
				U_{-\eta} = U_{-\eta^1} \times \cdots \times U_{-\eta^d}
			\end{equation}
			with $U_{-\eta^j} = \IR - i\sgn(\eta^j)[0,|\eta^j|)$,
	\end{enumerate}
then $a(\cdot,\cdot)$ possesses a unique linear extension $a:H_\eta^{\alpha/2}\times H_\eta^{\alpha/2}\rightarrow\IC$ which can be written as
	\begin{equation}
		a(\varphi,\psi) = \frac{1}{(2\pi)^d}\int_{\IR^d} A(\xi-i\eta)\widehat{\varphi}(\xi-i\eta)\overline{\widehat{\psi}(\xi-i\eta)}\d{\xi}
	\end{equation}
for all $\varphi,\psi\in H_\eta^{\alpha/2}(\IR^d)$.
\end{lemma}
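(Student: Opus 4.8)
The plan is to establish the displayed Fourier formula first on the dense subspace $C_0^\infty(\IR^d)$, then show that the right-hand side defines a bounded bilinear form on $H_\eta^{\alpha/2}\times H_\eta^{\alpha/2}$, and finally invoke density together with the BLT (bounded linear transformation) theorem to conclude that this bounded form is the unique continuous extension of $a$. Concretely, I would first recall from \eqref{eq-Aispseudo} that $\OA\varphi = \OF^{-1}(A\,\OF(\varphi))$ on $C_0^\infty$, and then recognize that for $\varphi\in C_0^\infty(\IR^d)$ and $\eta\in\IR^d$ the exponential moment condition i) is exactly what is needed for the characteristic exponent, a priori defined on $\IR^d$, to extend analytically to the tube $\IR^d - i\,[\,0,\eta\,]$ (coordinatewise), so that $A(\xi-i\eta)$ makes sense and $\OF(\varphi)(\xi-i\eta) = \OF(\varphi\,\ee{\skl\eta,\cdot\skr})(\xi)$ by a contour-shift/Cauchy argument using the rapid decay of $\OF(\varphi)$. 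Writing out $a(\varphi,\psi) = \skl\OA\varphi,\psi\skr_{L^2_\eta} = \int (\OA\varphi)(x)\,\psi(x)\,\ee{2\skl\eta,x\skr}\,\dd x$, I would apply Parseval's identity to the pair $\OA\varphi\cdot\ee{\skl\eta,\cdot\skr}$ and $\psi\cdot\ee{\skl\eta,\cdot\skr}$ and then shift the contour from $\IR^d$ to $\IR^d - i\eta$ to obtain
\[
a(\varphi,\psi) = \frac{1}{(2\pi)^d}\int_{\IR^d} A(\xi-i\eta)\,\widehat{\varphi}(\xi-i\eta)\,\overline{\widehat{\psi}(\xi-i\eta)}\,\dd\xi
\]
for all $\varphi,\psi\in C_0^\infty(\IR^d)$.

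The second step is the boundedness estimate. Using the polynomial growth bound ii), $|A(\xi-i\eta)|\le C_1(1+|\xi-i\eta|)^\alpha \le C_1'(1+|\xi|)^\alpha$, I would estimate
\[
|a(\varphi,\psi)| \le \frac{C_1'}{(2\pi)^d}\int_{\IR^d}(1+|\xi|)^{\alpha}\,|\widehat{\varphi}(\xi-i\eta)|\,|\widehat{\psi}(\xi-i\eta)|\,\dd\xi,
\]
split the weight as $(1+|\xi|)^{\alpha} = (1+|\xi|)^{\alpha/2}(1+|\xi|)^{\alpha/2}$, and apply Cauchy--Schwarz to land on $\|\varphi\|_{H^{\alpha/2}_\eta}\|\psi\|_{H^{\alpha/2}_\eta}$ by the very definition \eqref{def_normHseta} of the weighted Sobolev--Slobodeckii norm. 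This shows $a$ is continuous for the $H^{\alpha/2}_\eta$-topology on $C_0^\infty$, which is dense in $H^{\alpha/2}_\eta(\IR^d)$ by definition; hence $a$ extends uniquely to a continuous bilinear form on $H^{\alpha/2}_\eta\times H^{\alpha/2}_\eta$. Finally, since the right-hand side of the displayed formula is itself continuous in $(\varphi,\psi)$ on $H^{\alpha/2}_\eta\times H^{\alpha/2}_\eta$ (same Cauchy--Schwarz estimate) and agrees with $a$ on the dense set $C_0^\infty$, by uniqueness of the extension it equals $a(\varphi,\psi)$ for all $\varphi,\psi\in H^{\alpha/2}_\eta(\IR^d)$.

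The main obstacle, and the step requiring genuine care, is the analytic continuation and contour-shift argument: one must verify that under hypothesis i) the map $z\mapsto A(z)$ is well-defined and holomorphic on the closed tube domain $\overline{U_{-\eta}}$ (splitting the Lévy–Khinchine integral into the $|y|\le 1$ and $|y|>1$ parts, the small-jump part being entire and the large-jump part being controlled precisely by the exponential moment assumption on the segment $\sgn(\eta^1)[0,|\eta^1|]\times\cdots$), and that for $\varphi\in C_0^\infty$ the identity $\OF(\varphi\,\ee{\skl\eta,\cdot\skr})(\xi) = \widehat{\varphi}(\xi - i\eta)$ holds with $\widehat{\varphi}$ holomorphic and of rapid decay on horizontal strips (Paley--Wiener), so that Cauchy's theorem legitimately moves the $\IR^d$-contour to $\IR^d-i\eta$ without boundary contributions. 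Once this complex-analytic bookkeeping is in place, the remaining Parseval and Cauchy--Schwarz manipulations are routine. One should also note that $A$ is generally only measurable in the direction where $\eta^j$ may force $z$ onto the boundary of $U_{-\eta^j}$ — but since hypothesis ii) is assumed on the half-open tube $U_{-\eta}$ and the integration in the final formula is over $\IR^d - i\eta$ where $\eta$ is fixed, the estimate goes through; care is only needed to ensure $\xi\mapsto A(\xi-i\eta)$ is measurable, which follows from continuity of $A$ along $\IR^d - i\eta$.
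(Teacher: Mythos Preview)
Your proposal is correct and follows essentially the same route as the paper: the paper's proof is nothing more than a reference to Theorem~4.1 in \cite{EberleinGlau2012} combined with Parseval's identity, and your plan unpacks precisely these two ingredients---the cited theorem provides the analytic extension of the symbol to the strip $U_{-\eta}$ under the exponential moment condition~i), while Parseval plus the contour shift yields the displayed integral representation on $C_0^\infty$, after which your Cauchy--Schwarz bound and density argument complete the extension to $H_\eta^{\alpha/2}$.
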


\begin{proof}
The proof can be found in \cite{EberleinGlau2012} using Theorem~4.1 therein and Parseval's identity.
\end{proof}


In order to gain first insight in the convergence analysis, we fix a level $N$ in the Galerkin scheme and derive conditions for the convergence of the sequence of weak solutions that we obtain by approximating the stiffness matrix entries. In the implementation in Section \ref{sec-num}  below we will also approximately compute the right hand side of the equation. We therefore more generally consider sequences of stiffness matrices, right hand sides and initial conditions.

As usual, we denote for a given bilinear form $a:V\times V\to \rr$ the associated operator $\OA:V\to V^\ast$ defined by $\OA(u)(v):=a(u,v)$ for all $u,v\in V$.

\begin{lemma}\label{lem-conv1}
Let $V$, $H$ and $a:V\times V\to\rr$ satisfy (A1)--(A3).
Let $X:=\spann\{\varphi_1,\ldots,\varphi_N\}\subset V$ and for each $n\in\nn$ let
\begin{enumerate}[label=(An\arabic{*}),leftmargin=3em]
\item
 $f_n,f\in L^2(0,T;H)$ with $f_n\to f$ in $L^2\big(0,T;X^\ast)$,
\item
 $g_n,g\in H$ with $g_n\to g$ in $H$,
\item
$a_n:V\times V\to \rr$ be a bilinear form such that 
for  all $j,k\le N$,
\begin{align}
\big|(a_n-a)(\varphi_j,\varphi_k)\big|
&\to 0.\label{cond-an}
\end{align}
\end{enumerate}
Then the sequence of unique weak solutions $v_n\in W^1(0,T;X,H)$ of
\begin{equation}\label{eq-un}
\dot{v}_n + \OA_n v_n = f_n,\quad v_n(0)=g_n
\end{equation}
converges strongly in $L^2\big(0,T;X)\cap C(0,T;H)$ to the unique weak solution $v \in W^1(0,T;X,H)$ of
\begin{equation}\label{eq-gwu}
\dot{v} + \OA v = f,\quad v(0)=g.
\end{equation}
\end{lemma}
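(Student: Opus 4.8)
The plan is to use that $X$ is the \emph{fixed}, finite-dimensional space $\spann\{\varphi_1,\dots,\varphi_N\}$, so that \eqref{eq-un} and \eqref{eq-gwu} are in truth linear systems of ordinary differential equations, and then to run the classical parabolic energy estimate on their difference. Expanding in the basis, \eqref{eq-un} and \eqref{eq-gwu} read
\[
M\dot V+A_nV=F_n,\ \ V(0)=\alpha_n\qquad\text{resp.}\qquad M\dot V+AV=F,\ \ V(0)=\alpha,
\]
where $M_{jk}=\skl\varphi_k,\varphi_j\skr_H$ is the (invertible) Gram matrix of the linearly independent family $\{\varphi_j\}$, $(A_n)_{jk}=a_n(\varphi_k,\varphi_j)$, and $F_n,\alpha_n$ encode $f_n,g_n$. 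Well-posedness of $v$, and of $v_n$ for $n$ large, follows from (A1)--(A3), Theorem~23.A in \cite{Zeidler}, and the observation that $a_n$ inherits from $a$ a continuity bound and a G{\aa}rding inequality on $X$ with constants independent of $n$: by \eqref{cond-an} the $N\times N$ matrix with entries $(a_n-a)(\varphi_k,\varphi_j)$ tends to $0$, and since all norms on the finite-dimensional $X$ are equivalent, $|(a_n-a)(\chi,\chi)|\le\norm{\chi}_H^2$ for all $\chi\in X$ once $n$ is large enough, whence, uniformly in $n$,
\[
a_n(\chi,\chi)\ge G\norm{\chi}_V^2-(G'+1)\norm{\chi}_H^2,\qquad |a_n(\chi,\psi)|\le C'\norm{\chi}_V\norm{\psi}_V\qquad(\chi,\psi\in X).
\]

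Next I would set $w_n:=v_n-v$. Subtracting \eqref{eq-gwu} from \eqref{eq-un} and writing $\OA_n v_n-\OA v=\OA_n w_n+(\OA_n-\OA)v$ shows that
\[
\dot w_n+\OA_n w_n=r_n,\qquad r_n:=(f_n-f)-(\OA_n-\OA)v,\qquad w_n(0)=g_n-g .
\]
By (An2), $\norm{w_n(0)}_H\to0$; by (An1), $f_n-f\to0$ in $L^2(0,T;X^\ast)$. For the remaining term, write $v(t)=\sum_{k=1}^N V_k(t)\varphi_k$ with $V\in L^2(0,T;\IR^N)$ (possible since $v\in L^2(0,T;X)$ and $\dim X<\infty$); then the functional $(\OA_n-\OA)v(t)\in X^\ast$ acts on $\varphi_j$ by $\sum_{k=1}^N V_k(t)\,(a_n-a)(\varphi_k,\varphi_j)$, so by \eqref{cond-an}, Cauchy--Schwarz and norm equivalence on $X$ one gets $\norm{(\OA_n-\OA)v}_{L^2(0,T;X^\ast)}\to0$. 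Altogether $\norm{r_n}_{L^2(0,T;X^\ast)}\to0$.

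Finally, I would test the equation for $w_n$ with $w_n(t)\in X$; using the Gelfand-triple identity $\skl\dot w_n(t),w_n(t)\skr_{X^\ast\times X}=\tfrac12\frac{d}{dt}\norm{w_n(t)}_H^2$, the uniform G{\aa}rding inequality, and Young's inequality on the right-hand side, one obtains for $n$ large
\[
\frac{d}{dt}\norm{w_n(t)}_H^2+G\,\norm{w_n(t)}_V^2\le 2(G'+1)\norm{w_n(t)}_H^2+\tfrac1G\norm{r_n(t)}_{X^\ast}^2 .
\]
Dropping the nonnegative $V$-term and applying Grönwall's lemma gives $\sup_{t\in[0,T]}\norm{w_n(t)}_H^2\le C_T\big(\norm{w_n(0)}_H^2+\norm{r_n}_{L^2(0,T;X^\ast)}^2\big)\to0$, i.e.\ $v_n\to v$ in $C(0,T;H)$ (recall $W^1(0,T;X,H)\hookrightarrow C(0,T;H)$); integrating the differential inequality over $[0,T]$ and inserting this bound yields in addition $\int_0^T\norm{w_n(t)}_V^2\,\d t\to0$, i.e.\ $v_n\to v$ in $L^2(0,T;X)$, which is the claim. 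The only point needing genuine care is the one already used above and again for $(\OA_n-\OA)v$: passing from the merely entrywise convergence \eqref{cond-an} to the uniform-in-$n$ G{\aa}rding/continuity bounds for $a_n$ and to the $L^2(0,T;X^\ast)$-convergence of $(\OA_n-\OA)v$. Both rest on the elementary fact that on the finite-dimensional $X$ all norms are equivalent and $\max_{j,k\le N}|(a_n-a)(\varphi_j,\varphi_k)|$ controls everything, so there is no serious obstacle — the argument is the standard perturbation/energy estimate for linear parabolic equations, carried out on a fixed Galerkin subspace.
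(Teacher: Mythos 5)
Your proof is correct, and it takes a genuinely different route from the paper's. The paper does not carry out the parabolic energy estimate itself: it reduces the lemma to an abstract robustness result (Lemma~3 in Glau (2016)) and devotes the whole proof to verifying that lemma's hypotheses --- a continuity bound and a G{\aa}rding inequality for $a_n$ on $X$ holding uniformly in $n$ (the latter only for $n$ beyond some $N_0$), together with $\big|(a-a_n)(u,u)\big|\le c_n\widetilde{C}^2\|u\|_V^2$ --- using exactly the finite-dimensional norm-equivalence argument you also invoke. You instead make the convergence mechanism explicit: subtract the two equations, gather the perturbation into a residual $r_n=(f_n-f)-(\OA_n-\OA)v$ which tends to $0$ in $L^2(0,T;X^\ast)$, and close with the standard test-with-$w_n$, Young and Gr\"onwall estimate. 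The two proofs thus share their only nontrivial ingredient, namely passing from the entrywise convergence \eqref{cond-an} to uniform bilinear-form bounds on the fixed finite-dimensional $X$; what yours buys is self-containedness and an explicit quantitative bound
\begin{equation*}
\sup_{t\in[0,T]}\|w_n(t)\|_H^2+\int_0^T\|w_n(t)\|_V^2\,\dd t\;\le\; C_T\Big(\|g_n-g\|_H^2+\|r_n\|^2_{L^2(0,T;X^\ast)}\Big),
\end{equation*}
at the cost of redoing by hand what the cited robustness lemma encapsulates. One caveat, which you share with the paper: for small $n$ the perturbed form $a_n$ need not satisfy a G{\aa}rding inequality on $X$, so existence and uniqueness of $v_n$ is only guaranteed for $n$ large; since only the tail of the sequence matters for the convergence claim, this is harmless, but strictly speaking the statement's premise that each $v_n$ is a ``unique weak solution'' is only justified from some $N_0$ on.
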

The proof is provided in Section \ref{sec-proof-lem-conv1}.

Profound convergence analysis providing asymptotic convergence rates for the fully discrete scheme can be derived based on the techniques presented by \cite{SchwabWavelet03}, which is beyond the scope of the present article.


\subsection{The symbol method}\label{sec-symbol-method}

The key component of a Galerkin FEM solver is the model dependent stiffness matrix. Using expression~\eqref{eq-ajump} of section~\ref{sec:FlexibleImplementation} above, the entries of that matrix can be derived. The existence of the L{\'e}vy measure $F$ in that expression, however, renders the numerical derivation of the matrix rather cumbersome. Additionally,   the empirical accuracy study of section~\ref{sec:AccuracyStudy} emphasize that utmost care must be taken when the stiffness matrix entries are numerically derived. Consequently, in this section we approach the calculation of that FEM solver components differently. The Fourier approach indicated by Lemma~\ref{lem:SymbolMethod} will allow us to access the model information required for the stiffness matrix and all other FEM solver components not via the operator but rather via the associated symbol, instead. In stark contrast to the operator, the symbol of a L{\'e}vy model is numerically accessible in a unified way for a large set of underlying models and we will present several examples.

Let us state the core lemma of this section.
Here we concentrate on basis functions obeying a simple nodal translation property, which is in particular satisfied for the classical piecewise polynomial basis functions. The translation property can also easily been extended to that one satisfied by biorthogonal wavelet bases that haven prove to be useful for solving Komlogorov PIDEs for option pricing, compare \cite{MatachePetersdorffSchwab2004}.

\begin{lemma}[Symbol method for bilinear forms]
\label{lem:GeneralStiffnessSymbol}
Let the assumptions of Lemma~\ref{lem:SymbolMethod} be satisfied with $\eta=0$. Assume further for $N\in\IN$ a set of functions $\varphi_0,\varphi_1,\dots,\varphi_N\in H_0^{\alpha/2}(\IR)$ and nodes $x_1,\dots,x_N\in\IR$, such that for all $j=1,\dots,N$
	\begin{equation*}
		\varphi_j(x) = \varphi_0(x-x_j),\qquad \forall x\in\IR,
	\end{equation*}
holds. Then we have
	\begin{equation}
	\label{eq:SymbGenClaim1}
		a(\varphi_l,\varphi_k) = \frac{1}{2\pi}\int_\IR A(\xi)e^{i\xi(x_l-x_k)}\left|\widehat{\varphi_0}(\xi)\right|^2\d{\xi}.
	\end{equation}
for all $k,l=1,\dots,N$. If additionally
	\begin{equation}
	\label{eq:GeneralStiffnessSymbolReCond}
		\Re(A(\xi)) = \Re(A(-\xi))\quad\text{ and }\quad\Im(A(\xi)) = -\Im(A(-\xi)),
	\end{equation}
then
	\begin{equation}
	\label{eq:SymbGenClaim2}
		a(\varphi_l,\varphi_k) = \frac{1}{\pi}\int_0^\infty \Re\left(A(\xi)e^{i\xi(x_l-x_k)}\right)\left|\widehat{\varphi_0}(\xi)\right|^2\d{\xi}
	\end{equation}	
for all $k,l=1,\dots,N$.
\end{lemma}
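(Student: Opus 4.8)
The plan is to read both identities off the Fourier representation of the bilinear form supplied by Lemma~\ref{lem:SymbolMethod}. Since the hypotheses of that lemma are assumed with $\eta=0$ (and $d=1$), the form $a$ extends to $H_0^{\alpha/2}(\IR)\times H_0^{\alpha/2}(\IR)$ and
\[
a(\varphi,\psi) = \frac{1}{2\pi}\int_\IR A(\xi)\,\widehat{\varphi}(\xi)\,\overline{\widehat{\psi}(\xi)}\,\d{\xi},\qquad \varphi,\psi\in H_0^{\alpha/2}(\IR),
\]
the integral converging absolutely because $|A(\xi)|\le C_1(1+\norm{\xi})^\alpha$ while $\varphi,\psi\in H_0^{\alpha/2}$ forces $(1+|\xi|)^{\alpha/2}\widehat{\varphi},\,(1+|\xi|)^{\alpha/2}\widehat{\psi}\in L^2(\IR)$, so $(1+|\xi|)^{\alpha}|\widehat{\varphi}\,\overline{\widehat{\psi}}|\in L^1(\IR)$. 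The first step is to note that the translates $\varphi_j=\varphi_0(\cdot-x_j)$ again lie in $H_0^{\alpha/2}(\IR)$, because the norm \eqref{def_normHseta} at weight $\eta=0$ is translation invariant, and that, with the Fourier convention \eqref{def_FT}, $\widehat{\varphi_j}(\xi)=e^{i\xi x_j}\widehat{\varphi_0}(\xi)$. Inserting $\varphi=\varphi_l$, $\psi=\varphi_k$ into the displayed formula and simplifying $\widehat{\varphi_l}(\xi)\,\overline{\widehat{\varphi_k}(\xi)}=e^{i\xi(x_l-x_k)}|\widehat{\varphi_0}(\xi)|^2$ gives \eqref{eq:SymbGenClaim1} at once.

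For the second identity I would use that $\varphi_0$ is real-valued (as an element of the completion of $C_0^\infty(\IR)$), hence $\widehat{\varphi_0}(-\xi)=\overline{\widehat{\varphi_0}(\xi)}$ and $|\widehat{\varphi_0}|^2$ is even, and that \eqref{eq:GeneralStiffnessSymbolReCond} is exactly the statement $A(-\xi)=\overline{A(\xi)}$. Splitting the integral in \eqref{eq:SymbGenClaim1} at the origin and substituting $\xi\mapsto-\xi$ on $(-\infty,0)$ transforms the integrand there into
\[
A(-\xi)\,e^{-i\xi(x_l-x_k)}\,|\widehat{\varphi_0}(-\xi)|^2 = \overline{A(\xi)\,e^{i\xi(x_l-x_k)}}\;|\widehat{\varphi_0}(\xi)|^2,
\]
i.e.\ the complex conjugate of the integrand on $(0,\infty)$. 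Adding the two halves and using $z+\bar z = 2\Re(z)$ produces the factor $2$ that turns $\tfrac{1}{2\pi}$ into $\tfrac1\pi$ and yields \eqref{eq:SymbGenClaim2}.

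I do not expect a genuine obstacle; once the Fourier representation of Lemma~\ref{lem:SymbolMethod} is available this is a change of variables. The only points needing a little care are (i) verifying that the translates $\varphi_j$ indeed belong to the space $H_0^{\alpha/2}(\IR)$ on which the extended form is defined — this is the translation invariance of the norm at $\eta=0$; (ii) keeping the sign in the Fourier convention \eqref{def_FT} straight so that the translation yields $e^{i\xi x_j}$ and the product collapses to $e^{i\xi(x_l-x_k)}|\widehat{\varphi_0}(\xi)|^2$; and (iii) invoking the real-valuedness of $\varphi_0$, which makes $|\widehat{\varphi_0}|^2$ even and is what legitimises the symmetrisation in the second part.
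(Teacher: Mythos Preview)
Your proof is correct and follows the same line as the paper's: compute $\widehat{\varphi_j}(\xi)=e^{i\xi x_j}\widehat{\varphi_0}(\xi)$ from the translation property, insert into the Fourier representation of $a$ given by Lemma~\ref{lem:SymbolMethod} with $\eta=0$, and then symmetrise for the second identity. In fact your argument is more complete than the paper's, which merely says the second claim is ``elementary'' and, apparently as a leftover from an earlier draft, records the explicit Fourier transform of the piecewise linear hat function even though the lemma is stated for general $\varphi_0$.
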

\begin{proof}
Elementary properties of the Fourier transform yield
	\begin{equation}
	\label{eq:ProofSymbGen02}
		\widehat{\varphi_j}(\xi) = e^{i\xi x_j}\widehat{\varphi_0}(\xi),\qquad\forall\xi\in\IR,
	\end{equation}
where
	\begin{equation}
		\widehat{\varphi_0}(\xi) = \frac{2}{\xi^2 h}(1-\cos(\xi h)),\qquad\forall\xi\in\IR.
	\end{equation}
Since $\varphi_j\in H_0^{\alpha/2}(\IR)$, for all $j=1,\dots,N$, the identity~\eqref{eq:SymbGenClaim1}
follows from Lemma~\ref{lem:GeneralStiffnessSymbol} above or Theorem~4.1 and Remark~5.2 and the lines thereafter in \cite{EberleinGlau2012}, respectively. The second claim~\eqref{eq:SymbGenClaim2} is then elementary.
\end{proof}

\begin{corollary}[Symbol method for stiffness matrices]
\label{cor:SymbolMethodForStiffness}
Let $A\in S_\alpha^0$ be a univariate symbol with associated operator $\mathcal{A}$. Denote by $\varphi_j\in L^1(\IR)$, $j\in 1,\dots,N$ the basis functions of a Galerkin scheme associated with an equidistantly spaced grid $\Omega = \{x_1,\dots,x_N\}$ possessing the property
	\begin{equation}
	\label{eq:phiiphi0}
		\varphi_j(x) = \varphi_0(x-x_j),\qquad\forall x\in\IR,
	\end{equation}
for some $\varphi_0:\IR\rightarrow\IR$. Then, the stiffness matrix $A\in\IR^{N\times N}$ of the scheme can be computed by
	\begin{equation}
	\label{eq:StiffnessSymbol}
		A_{kl} = \frac{1}{2\pi}\int_\IR A(\xi)e^{i\xi(x_l-x_k)}\left|\widehat{\varphi_0}(\xi)\right|^2\d{\xi}
	\end{equation}	
for all $k,l=1,\dots,N$.
\end{corollary}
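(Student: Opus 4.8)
The plan is to obtain Corollary~\ref{cor:SymbolMethodForStiffness} as an immediate specialization of Lemma~\ref{lem:GeneralStiffnessSymbol}; accordingly, the work consists in checking that the hypotheses of that lemma are in force here and then transcribing its conclusion into matrix notation.

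First I would recall from Section~3.1 that the stiffness matrix of the Galerkin scheme is, by definition, $A_{kl}=a(\varphi_l,\varphi_k)$ for $k,l=1,\dots,N$, where $a$ is the bilinear form associated with the pseudodifferential operator $\mathcal{A}$ of symbol $A$. Hence it suffices to evaluate $a(\varphi_l,\varphi_k)$ in Fourier space.

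Next I would verify the assumptions of Lemma~\ref{lem:GeneralStiffnessSymbol}, i.e.\ those of Lemma~\ref{lem:SymbolMethod} with $\eta=0$. Condition~(i) is automatic: for $\eta=0$ the admissible index set collapses to $\{0\}$ and the required exponential moment reduces to $\int_{|x|>1}F(\d{x})$, which is finite for every L\'evy measure because $\int_{\IR}(|x|^2\wedge1)\,F(\d{x})<\infty$. Condition~(ii) with $\eta=0$ reads $|A(z)|\le C_1(1+\norm{z})^{\alpha}$ on $U_0=\IR$, and this is precisely the defining growth bound of the symbol class $S_\alpha^0$, hence it holds by assumption. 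Finally, the nodal translation property~\eqref{eq:phiiphi0} assumed here is exactly the structural hypothesis of Lemma~\ref{lem:GeneralStiffnessSymbol}, with the equidistant grid $\Omega=\{x_1,\dots,x_N\}$ supplying the nodes $x_j$; for the standard piecewise-polynomial FEM basis the functions $\varphi_j$ are compactly supported with $\widehat{\varphi_0}$ decaying at least like $|\xi|^{-2}$, so that $\varphi_j\in H_0^{\alpha/2}(\IR)$ for the relevant range $\alpha\le 2$, which is the regularity actually used in Lemma~\ref{lem:GeneralStiffnessSymbol}.

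With all hypotheses in place, identity~\eqref{eq:SymbGenClaim1} of Lemma~\ref{lem:GeneralStiffnessSymbol} gives $a(\varphi_l,\varphi_k)=\frac{1}{2\pi}\int_{\IR}A(\xi)e^{i\xi(x_l-x_k)}|\widehat{\varphi_0}(\xi)|^2\,\d{\xi}$, and substituting this into $A_{kl}=a(\varphi_l,\varphi_k)$ yields~\eqref{eq:StiffnessSymbol}. There is no genuinely hard step: the proof is essentially bookkeeping. The only point deserving a word of care is that the statement only asks for $\varphi_j\in L^1(\IR)$, whereas Lemma~\ref{lem:GeneralStiffnessSymbol} is applied with $\varphi_j\in H_0^{\alpha/2}(\IR)$; for the finite element bases actually used this additional regularity is automatic, but in a fully formal presentation one would promote it to a standing assumption on the basis (or bound $\alpha$ accordingly).
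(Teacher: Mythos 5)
Your proposal is correct and follows the same route as the paper, which simply declares the corollary an immediate consequence of Lemma~\ref{lem:GeneralStiffnessSymbol}; you merely spell out the verification of the hypotheses (the $\eta=0$ moment condition, the $S_\alpha^0$ growth bound, and the translation property) that the paper leaves implicit. Your closing remark that the stated assumption $\varphi_j\in L^1(\IR)$ should really be $\varphi_j\in H_0^{\alpha/2}(\IR)$ to invoke the lemma is a fair and accurate observation about the corollary as written.
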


\begin{proof}
The proof is an immediate consequence of Lemma~\ref{lem:GeneralStiffnessSymbol}.
\end{proof}

\begin{remark}[On the symbol method for bilinear forms]
\label{rem:SymbolForBilinearForms}
From a numerical perspective, the representations of the stiffness matrix entries provided in Lemma~\ref{lem:GeneralStiffnessSymbol} and Co\-rol\-lary~\ref{cor:SymbolMethodForStiffness} are highly promising: 
\begin{enumerate}
\item Instead of the double integrals appearing in \eqref{eq-ajump}, only one dimensional integral need to be computed.
\item The model specific information is expressed via the symbol $\xi\mapsto A(\xi)$, which for a large set of models is available in form of an explicit function of $\xi$  and the model parameters, a feature that we now can exploit numerically. We give a short list of examples below. For further examples we refer to \cite{Glau2013} and \cite{Glau2016}.
\item
Representation \eqref{eq:StiffnessSymbol} displays the 
entries of the stiffness matrix as \emph{Fourier integrals}. Moreover, the nodes appear as Fourier variables. As a consequence, Fast Fourier Transform (FFT) methods can be used to accelerate their simultaneous computation.
\end{enumerate}

\end{remark}

Expression~\eqref{def-opA} introduces operators $\mathcal{A}$ for L{\'e}vy processes $L$ in terms of the characteristic triplet $(b,\sigma,F)$. The following examples present the respective symbols for some well known L{\'e}vy models.

\begin{example}[Symbol in the Black-Scholes (BS) model]
In the univariate Black-Scholes model, determined by the Brownian volatility $\sigma>$, the symbol is given by
\label{ex:BSsymbol}
	\begin{equation}
	\label{eq:defbssymbol}
		A(\xi) = A^\text{bs}(\xi) = i\xi b + \frac{1}{2}\sigma^2\xi^2,
	\end{equation}
with drift set $b$ to
	\begin{equation}
		b = r-\frac{1}{2}\sigma^2.
	\end{equation}
\end{example}

\begin{example}[Symbol in the Merton model]
\label{ex:defMertonsymbol}
In the Merton model where $\sigma>$, $\lambda>0$, $\alpha\in\IR$ and $\beta>0$, the symbol computes to
	\begin{equation}
		A(\xi) = A^\text{merton}(\xi) = \frac{1}{2}\sigma^2\xi^2 + i\xi b - \lambda\left(e^{-i\alpha\xi-\frac{1}{2}\beta^2\xi^2}-1\right)
	\end{equation}
	with drift set to
        \begin{equation}
               b = r-\frac{1}{2}\sigma^2-\lambda\left(e^{\alpha+\frac{\beta^2}{2}}-1\right),
        \end{equation}
as required by the no-arbitrage condition.
\end{example}

\begin{example}[Symbol in the CGMY model]
\label{ex:CGMYsymbol}
In the CGMY model of \cite{CarrGemanMadanYor2002} with $\sigma>0$, $C>0$, $G\geq 0$, $M\geq 0$ and $Y\in(1,2)$, the symbol computes to
	\begin{multline}
	\label{eq:defcgmysymbol}
		A(\xi) = A^\text{cgmy}(\xi) = i\xi b + \frac{1}{2}\sigma^2\xi^2 \\
		-C\Gamma(-Y) \left[(M+i\xi)^Y - M^Y + (G-i\xi)^Y - G^Y\right],
	\end{multline}
for all $\xi\in\IR$, with drift $b$ set to
	\begin{equation}
		b = r-\frac{1}{2}\sigma^2-C\Gamma(-Y)\left[(M-1)^Y - M^Y + (G+1)^Y-G^Y\right]
	\end{equation}
for martingale pricing.
\end{example}

\begin{example}[Symbol in the NIG model]
\label{ex:NIGsymbol}
With $\sigma>0$, $\alpha>0$, $\beta\in\IR$ and $\delta>0$ such that $\alpha^2>\beta^2$, the symbol of the NIG model is given by
	\begin{equation}
	\label{eq:defnigsymbol}
		A(\xi) = A^\text{nig}(\xi) = \frac{1}{2}\sigma^2\xi^2 + i\xi b -\delta\left(\sqrt{\alpha^2 - \beta^2} - \sqrt{\alpha^2 - (\beta-i\xi)^2}\right)
	\end{equation}
for all $\xi\in\IR$ with drift given by
	\begin{equation}
		b = r-\frac{1}{2}\sigma^2 - \delta\left(\sqrt{\alpha^2-\beta^2}-\sqrt{\alpha^2-(\beta+1)^2}\right)
	\end{equation}
\end{example}

Implementing \eqref{eq:StiffnessSymbol}, we encounter new numerical challenges: From the perturbation study in Section \ref{sec:AccuracyStudy} that we need to evaluate the integrals at high precision.
Consider first the Black-Scholes model and choose the piecewise linear hat functions as basis elements as a toy example. Applying a standard Matlab integration routine will lead to considerable errors. To understand the effect, let us have a closer look at the integrands in \eqref{eq:StiffnessSymbol}, which we depict in Figure\tild\ref{fig:phi0hatFourier}.

 
\begin{figure}[h!]
\begin{center}
\makebox[0pt]{\includegraphics[scale=1]{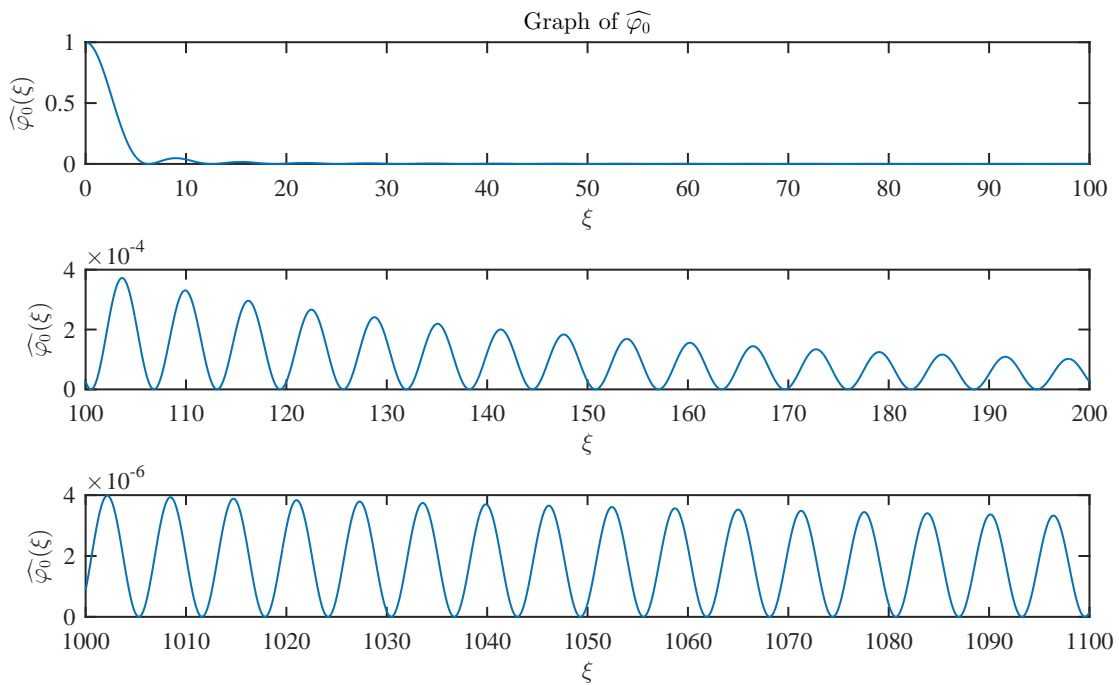}}
\caption{Graph of $\widehat{\varphi_0}$, the Fourier transform of the hat function $\varphi_0$ centered over the origin, evaluated over three subintervals of $\IR^+$. The mesh is chosen with $h=1$. The oscillations and the rather slow decay to zero complicate numerical integration with high accuracy requirements considerably when $\widehat{\varphi_0}$ is involved.}
\label{fig:phi0hatFourier}
\end{center}
\end{figure}
\begin{figure}
\begin{center}
\makebox[0pt]{\includegraphics[scale=1]{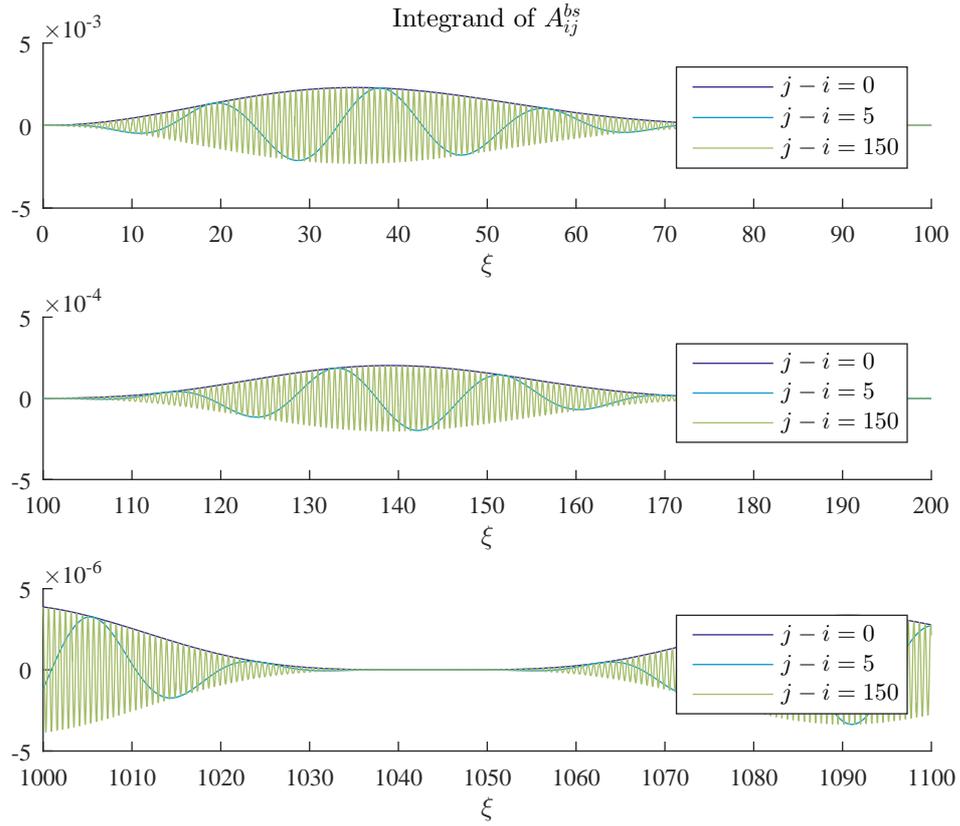}}
\caption{The integrand for the Black-Scholes stiffness matrix $A_{kl}$ for several values of $l-k$. The grid of the hat functions spans the interval $[-5, 5]$ with $150$ equidistantly spaced inner nodes and grid fineness $h=0.0662$. A Black-Scholes solution on this grid would thus be represented by the weighted sum of $150$ hat functions. We observe that oscillations of the integrand increase in the value of $|l-k|$.}
\label{fig:HatBSIntegrands}
\end{center}
\end{figure}
More precisely, we show several integrands of $A\in\IR^{N\times N}$ in the representation provided by~\eqref{eq:StiffnessSymbol} of Corollary~\ref{cor:SymbolMethodForStiffness} with the symbol given by Example~\ref{ex:BSsymbol}. Each integrand is evaluated for a different value of $l-k$ over three different subintervals taken from the unbounded integration range. In the Fourier approach of calculating the stiffness matrix $A\in\IR^{N\times N}$ via the respective symbol, the integrands of $A_{kl}$ would have to be numerically integrated for all $l-k\in\{-(N-1),\dots,-1,0,1,\dots,N-1\}$.

The larger $|j-i|$, however, the more severe the numerical challenges for evaluating the integrand, as Figure~\ref{fig:HatBSIntegrands} demonstrates. All integrands illustrated therein decay rather slowly. Additionally, oscillations increase in $|j-i|$. In combination, these two observation seriously threaten a numerically reliable evaluation of the integral. With increasing values of $|j-i|$, the oscillations of the integrand accelerate and the number of necessary supporting points for accurate integration soars.


These considerations show us that we need to further investigate the problem to obtain a flexible method to compute the stiffness matrix reliably and with low computational cost. The path that we propose here is to modify the problem in such a way that the resulting integrands decay much faster so that the domain of integration can be chosen considerably smaller and a usual integration routine such as Matlab's function \texttt{quadgk} is sufficient. To do so, we first observe that the hat functions, which we used in our toy example, are piecewise linear functions. While being continuous they are not continuously differentiable everywhere and thus lack smoothness on an elementary level already. This lack of smoothness translates into a slow decay of their Fourier transform. 

Therefore, we propose to replace the piecewise linear basis functions by basis functions that display considerably higher regularity leading to appealing decay properties of the integrands in \eqref{eq:StiffnessSymbol}. In the following two sections, we present two different approaches to implement such a problem modification.



\subsubsection{From classic basis functions to mollified hats}
It is well known that convolution with a smooth function has a smoothing effect on the function that the convolution is applied to. Our first basis function alternative will therefore be a classic hat function smoothed by convolution. Functions that qualify for this smoothing by convolution are called mollifiers. 
 In order to choose an appropriate mollifier for our purposes---the fast and accurate computation of the integrals in \eqref{eq:StiffnessSymbol}, the mollifiers need to display two essential features:
 \begin{itemize}
 \item[(1)] The Fourier transform of the modified basis function needs to be available. 
 \item[(2)] The smoothing effect needs to be steerable through a parameter.
 \end{itemize}
 As the Fourier transform of the convolution of two functions is the product of the two Fourier transformed functions, (2) boils down to the availability of the Fourier transform of the mollifier. Since the Fourier transform of standard mollifiers is not available in closed form, we widen the range of the standard mollifiers and allow for non-compact support.
More precisely,  we call the sequence $m=(m_k)_{k\in\IN}$, $m_k\in L^1(\IR^d)$ for all $k\in\IN$, a \textit{mollifier}, if
	\begin{enumerate}
		\item $m_k \geq 0$, for all $k\in\IN$,
		\item $\int_{\IR^d} m_k(x)\d{x} = 1$, and
		\item for all $\varrho>0$ we have the convergence
				$\int_{\IR^d\backslash B_\varrho(0)}m_k(x)\d{x} \rightarrow 0$ for $k\rightarrow \infty$.
			
	\end{enumerate}
	
Feature (1) is often required and we follow the usual construction here. By Proposition and Definition~2.14 in \cite{AltFunktionalanalysis2011} we can adjust the influence of mollification by a mollification $\varepsilon$.
To this end let $m\in L^1(\IR^d)$ with
	\begin{equation}
		m \geq 0,\quad\text{ and }\quad\int_{\IR^d}m(x)\d{x}=1.
	\end{equation}
Define
	\begin{equation}
	\label{eq:MolliEpsilon}
		m^\varepsilon = \frac{1}{\varepsilon^d}m\left(\frac{\cdot}{\varepsilon}\right).
	\end{equation}
Then for each $\varrho>0$ we have
	$
		\int_{\IR^d}m^\varepsilon(x)\d{x} = 1\text{ and } \int_{\IR^d\backslash B_\varrho(0)} m^\varepsilon(x)\d{x} \rightarrow 0$
for $\varepsilon\rightarrow 0$. Consequently, for each null sequence $(\varepsilon_k)_{k\in\IN}$ the sequence $(m^{\varepsilon_k})_{k\in\IN}$ is a mollifier in the sense of our definition.

\begin{example}[A mollifier based on the Normal distribution]
\label{ex:GaussianDiracSequence}
We present an example for a mollifier. Define 
	\begin{equation}
	\label{eq:defExDiracGauss}
		m_\text{Gaussian} (x)= \frac{1}{\sqrt{2\pi}} e^{-\frac{x^2}{2}}.
	\end{equation}
Then we call $(m^{\varepsilon_k}_\text{Gaussian})_{k\in\IN}$ defined according to~\eqref{eq:MolliEpsilon} 
a \emph{Gaussian mollifier}.
The characteristic function of the Gaussian mollifier is known in closed form,	\begin{equation}
	\label{eq:defExQuasiMolliGaussFourier}
		\widehat{m_\text{Gaussian}^\varepsilon}(\xi) = \exp\left(-\frac{1}{2} \varepsilon^2\xi^2\right),
	\end{equation}
thus exhibiting exponential decay.
\end{example}


It is a well known result, that mollified functions $f\conv m_k$ converge to $f$ in $L^p(\IR^d)$, $1\leq p<\infty$ when $k$ tends to infinity, see for example Satz~2.15 in \cite{AltFunktionalanalysis2011}.



\begin{figure}
\begin{center}
\makebox[0pt]{\includegraphics[scale=1]{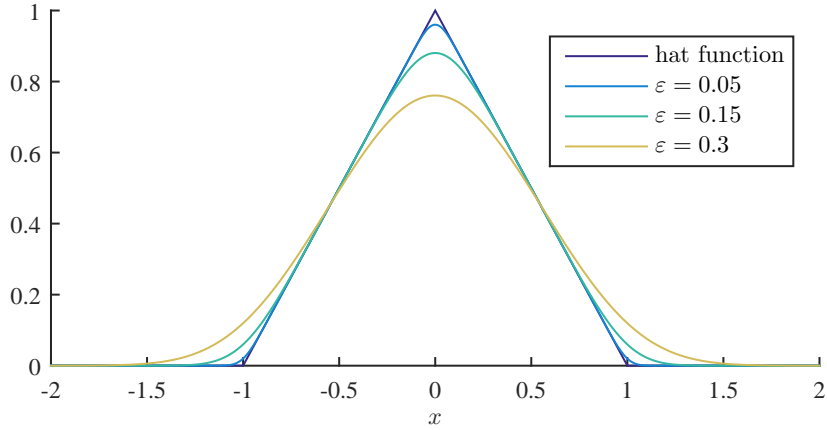}}
\caption{A comparison between the classic hat function $\varphi_0$ on a grid with $h=1$ and the mollified hat function $\varphi^\varepsilon_0 = \varphi_0 \conv m_\text{Gaussian}^\varepsilon$ for several values of $\varepsilon\in \{0.05, 0.15, 0.3\}$ using the Gaussian mollifier of Example~\ref{ex:GaussianDiracSequence}.}
\label{fig:exfemhatnormalmollified}
\end{center}
\end{figure}
\begin{figure}[t]
\begin{center}
\makebox[0pt]{\includegraphics[scale=1]{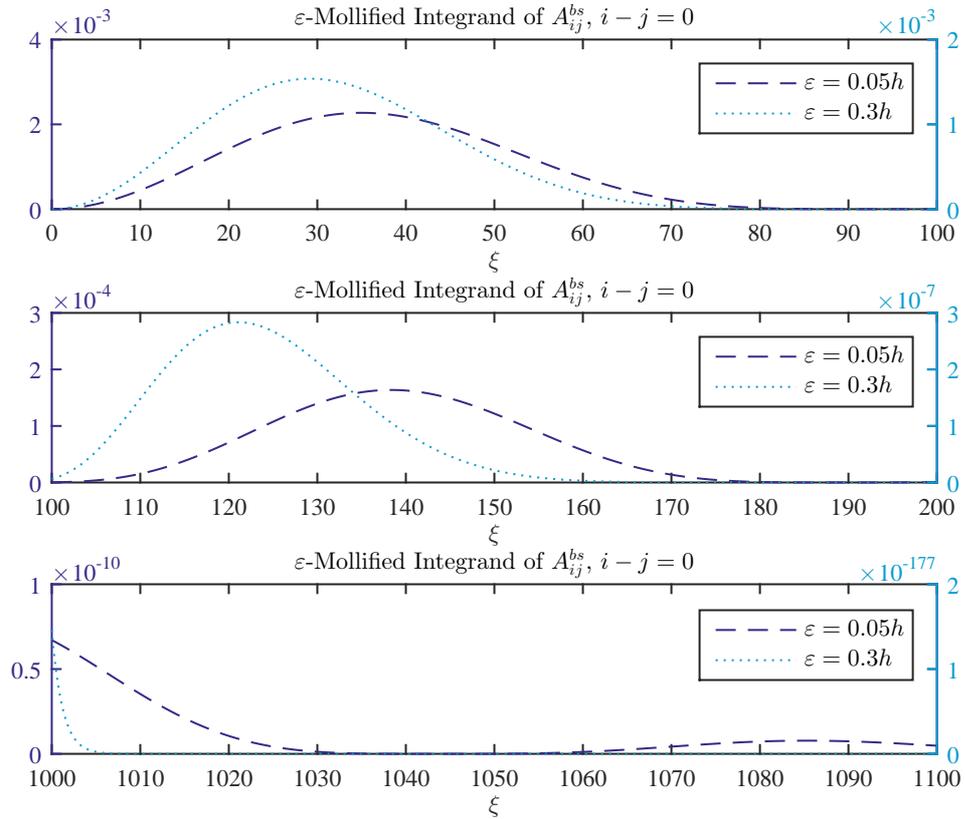}}
\caption{The integrand of $A_{kl}$, the stiffness matrix of the Black-Scholes model with mollified hat functions as basis functions for the main diagonal entry, $l-k=0$.}
\label{fig:HatBSIntegrandsMollified}
\end{center}
\end{figure}



Figure~\ref{fig:HatBSIntegrandsMollified} displays the integrand in the stiffness matrix of the Black-Scholes model. The integrand is evaluated on three subintervals of the semi-infinite integration region. The grid setting is identical to the one of Figure~\ref{fig:HatBSIntegrands}. Instead of classic hat functions their mollified counterparts have been employed as basis functions using the Gaussian mollifier of Example~\ref{ex:GaussianDiracSequence} as smoothing influence. Even with just a slight mollification influence, $\varepsilon = 0.05h$, the the integrand decays noticeably faster thanks to the mollification. For moderate values of $\varepsilon=0.3h$ the integrand decays to zero rapidly.

Before we test the performance of this approach to modify the Galerkin scheme in Section \ref{sec-num} below, we introduce our second approach.

\subsubsection{Splines as basis functions}
Instead of mollification of piecewise linear basis functions, we can choose basis functions that display higher regularity itself.
We therefore investigate a well-established class of Finite Element basis functions as candidates for our purposes, namely cubic splines. 
 Spline theory is a well-investigated field that applies to a much broader context than we consider here. We refer the reader to \cite{Schumaker07} for an introduction and overview. From our perspective, splines are smooth basis functions. Their Fourier transform is accessible and the theory of function spaces they span is well-established. We give the definition of the Irwin-Hall cubic spline that inherits is name from related probability distribution.

We define the univariate \emph{Irwin-Hall spline} $\varphi:\IR\rightarrow\IR^+$ by
	\begin{equation}
	\label{eq:defSpline}
	\varphi(x) = \frac{1}{4}\begin{cases}
		(x+2)^3 & ,\ -2\leq x < -1\\
		3|x|^3 - 6x^2 + 4 & ,\ -1\leq x < 1\\
		(2-x)^3 & ,\ 1\leq x \leq 2\\
		0 & ,\ \text{ elsewhere}
		\end{cases}
	\end{equation}
for all $x\in\IR$. The spline $\varphi$ has compact support on $[-2,2]$ and is a cubic spline. We use it to define a spline basis:

\begin{definition}[Spline basis functions on an equidistant grid]
\label{def:FEMsplinefunctions}
Choose $N\in\IN$. Assume an equidistant grid $\Omega=\{x_1,\dots,x_N\}$, $x_j\in\IR$ for all $j=1,\dots,N$, with mesh fineness $h>0$. Let $\varphi$ the Irwin-Hall spline of \eqref{eq:defSpline}. For $j=1,\dots,N$ define
	\begin{equation*}
		\varphi_j(x) = \varphi((x-x_j)/h),\qquad \forall x\in\IR.
	\end{equation*}
We call $\varphi_j$ \emph{the spline basis function associated to node $j$}.
\end{definition}

For a given grid $\Omega=\{x_1,\dots,x_N\}$, $x_j\in\IR$, Definition~\ref{def:FEMsplinefunctions} provides the set of spline basis functions that we also use in our numerical implementation, later. In standard spline literature, the set of Irwin-Hall spline basis functions is usually enriched with additional splines associated with the first and the last node of the grid that provide further flexibility in terms of Dirichlet and Neumann boundary conditions, see for example~\cite{Schumaker07}. We omit the three Irwin-Hall basis functions associated with either of the first and the last grid nodes thus implicitly prescribing Dirichlet, Neumann and second order derivative zero boundary conditions. 

\begin{lemma}[Fourier Transform of the Irwin-Hall spline]
\label{lem:IHsplineFourier}
Let $\varphi$ be the Irwin-Hall cubic spline of \eqref{eq:defSpline}. Then its Fourier transform $\widehat{\varphi}$ is given by
	\begin{equation}
		\widehat{\varphi}(\xi) = \frac{3}{\xi^4}\left(\cos(2\xi) - 4\cos(\xi) +3\right)
	\end{equation}
for all $\xi\in\IR$.
\end{lemma}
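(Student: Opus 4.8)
The plan is to compute the Fourier transform $\widehat{\varphi}(\xi) = \int_{\IR} e^{i\xi x}\varphi(x)\,\d{x}$ directly from the piecewise-polynomial definition \eqref{eq:defSpline}, exploiting the fact that $\varphi$ is even. First I would split the integral according to the three nonzero pieces and use the symmetry $\varphi(-x)=\varphi(x)$ to write $\widehat{\varphi}(\xi) = 2\int_0^2 \cos(\xi x)\varphi(x)\,\d{x}$, reducing everything to real cosine integrals over $[0,1]$ and $[1,2]$. On $[0,1]$ the integrand is $\tfrac14(3x^3-6x^2+4)\cos(\xi x)$ and on $[1,2]$ it is $\tfrac14(2-x)^3\cos(\xi x)$; both are handled by repeated integration by parts (equivalently, by the standard antiderivatives of $x^n\cos(\xi x)$).

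An alternative, and in my view cleaner, route is to recognize the Irwin-Hall spline as a fourfold convolution: up to scaling, $\varphi$ is the density of the sum of four independent uniform random variables on $[-\tfrac12,\tfrac12]$ (rescaled), i.e. $\varphi = c\,(\1_{[-1,1]}\conv\1_{[-1,1]})\conv\ldots$ — more precisely $\varphi$ is a B-spline of order $4$. Since the Fourier transform of $\1_{[-1,1]}$ is $\tfrac{2\sin\xi}{\xi}$ and convolution turns into products, one gets $\widehat{\varphi}(\xi) = \big(\tfrac{2\sin\xi}{\xi}\big)^4 \cdot (\text{normalizing constant})$, and then I would use the identity $16\sin^4\xi = \cos(2\xi)-4\cos(\xi)+3$ (which follows from $\sin^2\xi = \tfrac12(1-\cos 2\xi)$ applied twice) together with the factor $\xi^{-4}$ to land on the claimed closed form $\widehat{\varphi}(\xi)=\tfrac{3}{\xi^4}(\cos(2\xi)-4\cos(\xi)+3)$. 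One checks the constant by evaluating at $\xi=0$: the right-hand side has limit $\tfrac32$ (via a Taylor expansion of the numerator, whose leading term is $\tfrac12\xi^4$), which must equal $\int_\IR\varphi(x)\,\d{x}$, and indeed $\int_{-2}^2\varphi = \tfrac14\cdot(\tfrac14+\tfrac{11}{2}\cdot\text{pieces})$ — I would verify this integral equals $\tfrac32$ to confirm the normalization in \eqref{eq:defSpline} is consistent with the stated transform.

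The main obstacle is bookkeeping: whichever route one takes, there is a genuine risk of sign errors and misplaced factors of $\tfrac14$ and $h$. The convolution route minimizes this but requires first establishing the B-spline representation of $\varphi$ precisely (matching the knot locations $-2,-1,1,2$ and the normalization $\tfrac14$), which itself is a short but non-trivial verification — one must confirm that the piecewise cubic in \eqref{eq:defSpline} genuinely agrees with the order-$4$ cardinal B-spline on each subinterval, e.g. by checking the value and first three derivatives match at the knots or by direct expansion of the convolution. The trigonometric simplification $16\sin^4\xi = 3 - 4\cos\xi + \cos 2\xi$ is routine. Finally I would note that $\widehat{\varphi}$ extends continuously to $\xi=0$ with value $\tfrac32$ (so the formula is to be read with this removable singularity), and that the $\xi^{-4}$ decay reflects the $C^2$ regularity of the cubic spline — precisely the improved decay over the hat function's $\xi^{-2}$ that motivates this section.
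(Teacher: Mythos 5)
The paper offers no actual proof here (it just says the result ``follows by elementary calculation''), so your direct route --- exploiting evenness to write $\widehat{\varphi}(\xi)=2\int_0^2\cos(\xi x)\varphi(x)\,\d x$ and integrating by parts piecewise --- is exactly the intended elementary computation and is fine; your consistency checks (numerator $\sim\tfrac12\xi^4$ near $0$, $\int_{-2}^2\varphi=\tfrac32$) are both correct and worth keeping.

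Your ``cleaner'' convolution route, however, contains a concrete scaling error as written. The fourfold convolution of $\1_{[-1,1]}$ is supported on $[-4,4]$, not $[-2,2]$, so $\varphi$ cannot be proportional to it; the correct building block is $\1_{[-1/2,1/2]}$, whose Fourier transform is $\tfrac{\sin(\xi/2)}{\xi/2}$, giving $\varphi=\tfrac32 B_3$ with $\widehat{B_3}(\xi)=\bigl(\tfrac{\sin(\xi/2)}{\xi/2}\bigr)^4$ and hence $\widehat{\varphi}(\xi)=\tfrac{24\sin^4(\xi/2)}{\xi^4}$. Consequently the trigonometric identity you invoke, $16\sin^4\xi=\cos(2\xi)-4\cos\xi+3$, is false (test $\xi=\pi/2$: the left side is $16$, the right side is $2$); the identity you actually need is the half-angle version $8\sin^4(\xi/2)=3-4\cos\xi+\cos(2\xi)$, which combined with the factor $\tfrac{24}{8}=3$ yields the claimed formula. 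This is a bookkeeping slip of precisely the kind you warned about rather than a conceptual gap, and your primary integration-by-parts route is unaffected, but as stated the alternative derivation does not close.
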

The proof of the Lemma follows by elementary calculation. This immediately gives the following

\begin{corollary}[Fourier Transform of spline basis functions on an equidistant grid]
\label{cor:FTsplines}
Choose $N\in\IN$. Assume an equidistant grid $\Omega=\{x_1,\dots,x_N\}$, $x_j\in\IR$ for all $j=1,\dots,N$, with mesh fineness $h>0$ and let $\varphi_j$ be the spline basis function associated with node $j$ as defined in Definition~\ref{def:FEMsplinefunctions}. Its Fourier transform is given by
	\begin{equation*}
		\widehat{\varphi_j}(\xi) = e^{i\xi x_j} \frac{3}{h^3 \xi^4}(\cos(2\xi h) - 4\cos(\xi h) +3)
	\end{equation*}
for all $\xi\in\IR$.
\end{corollary}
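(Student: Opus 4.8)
The plan is to derive the Fourier transform of $\varphi_j$ from that of the base spline $\varphi$ established in Lemma~\ref{lem:IHsplineFourier}, using only the scaling and translation rules for the Fourier transform together with the defining relation $\varphi_j(x) = \varphi((x-x_j)/h)$ from Definition~\ref{def:FEMsplinefunctions}.

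First I would combine the two elementary operations. Writing $\psi(x) := \varphi(x/h)$, the scaling rule gives $\widehat{\psi}(\xi) = h\,\widehat{\varphi}(h\xi)$ (with the Fourier transform convention \eqref{def_FT}, the dilation by $h$ produces a factor $h$ and evaluates $\widehat{\varphi}$ at $h\xi$). Then $\varphi_j(x) = \psi(x-x_j)$, and the translation rule gives $\widehat{\varphi_j}(\xi) = e^{i\xi x_j}\,\widehat{\psi}(\xi) = e^{i\xi x_j}\,h\,\widehat{\varphi}(h\xi)$.

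Next I would substitute the explicit formula from Lemma~\ref{lem:IHsplineFourier}, namely $\widehat{\varphi}(\xi) = \tfrac{3}{\xi^4}(\cos(2\xi) - 4\cos\xi + 3)$, evaluated at $h\xi$:
\begin{equation*}
\widehat{\varphi_j}(\xi) = e^{i\xi x_j}\, h \cdot \frac{3}{(h\xi)^4}\bigl(\cos(2h\xi) - 4\cos(h\xi) + 3\bigr) = e^{i\xi x_j}\,\frac{3}{h^3\xi^4}\bigl(\cos(2\xi h) - 4\cos(\xi h) + 3\bigr),
\end{equation*}
which is exactly the claimed identity. This holds for all $\xi \in \IR$, with the apparent singularity at $\xi = 0$ removable since the bracketed term vanishes to fourth order there (this matches the fact that $\varphi \in L^1$, so $\widehat{\varphi}$ is continuous and bounded).

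There is no real obstacle here: the only point requiring a moment of care is keeping the normalization constant straight under the dilation — whether the factor is $h$, $1/h$, or $1$ depends on the chosen Fourier convention, and one should check consistency with the convention fixed in \eqref{def_FT} and with the hat-function computation $\widehat{\varphi_0}(\xi) = \tfrac{2}{\xi^2 h}(1-\cos(\xi h))$ appearing in the proof of Lemma~\ref{lem:GeneralStiffnessSymbol}. Once that bookkeeping is done, the corollary is immediate, as the statement already indicates.
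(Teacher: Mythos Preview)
Your proposal is correct and is exactly the argument the paper has in mind: the corollary is stated as an immediate consequence of Lemma~\ref{lem:IHsplineFourier}, and your application of the translation and scaling rules for the Fourier transform (with the convention of~\eqref{def_FT}) to the defining relation $\varphi_j(x)=\varphi((x-x_j)/h)$ spells out precisely that immediate step.
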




\begin{figure}[t]
\begin{center}
\makebox[0pt]{\includegraphics[scale=1]{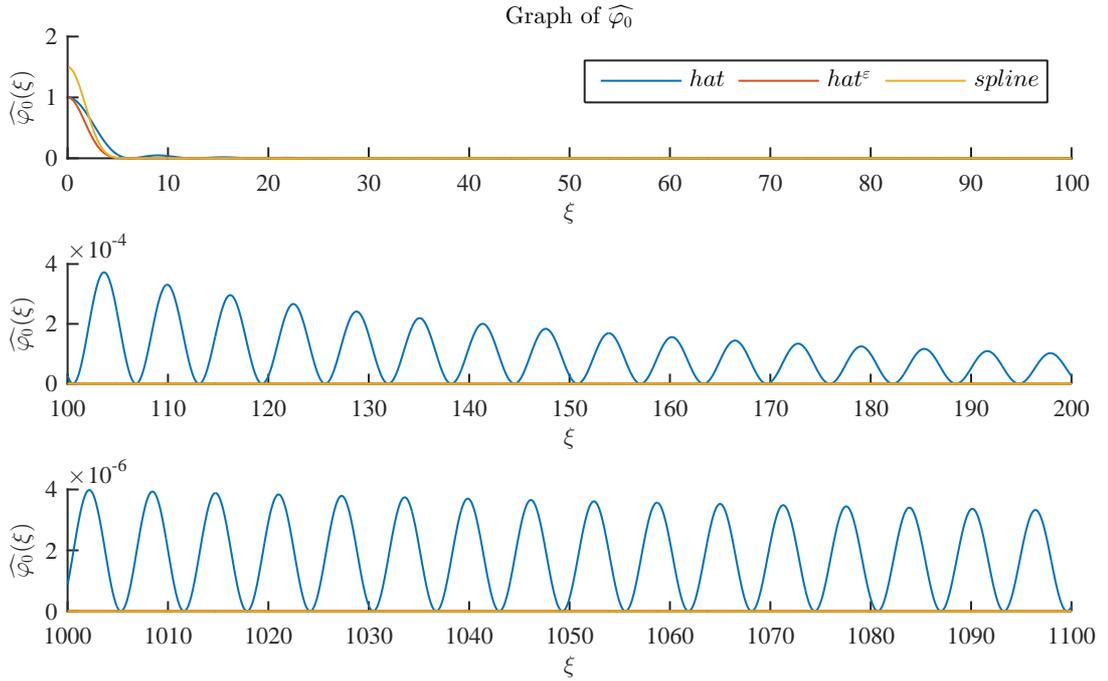}}
\caption{Graphs of the Fourier transforms of all basis function candidates presented in this section, evaluated over three subintervals of $\IR^+$. The mesh is chosen with $h=1$, the mollification parameter is again set to $\varepsilon=0.3h$.}
\label{fig:phi0allFourier}
\end{center}
\end{figure}

Figure~\ref{fig:phi0allFourier} compares the decay behaviour of the Fourier transforms of all three basis function types. As Figure~\ref{fig:phi0hatFourier} already illustrated, the Fourier transform of the classic hat functions exhibits both slow decay rates and oscillatory behaviour. In stark contrast the Fourier transforms of the mollified hats as well as the Fourier transform of Irwin-Hall splines visually decays to zero instantly. In case of the mollified hat functions this is due to the exponential decay of the Fourier transform of the Gaussian mollifier while for splines Corollary~\ref{cor:FTsplines} displays a polynomial decay of order $4$. In this regard, both alternatives to the classic hat functions are promising candidates for the implementation of the symbol method of Corollary~\ref{cor:SymbolMethodForStiffness}. In the upcoming Section\tild\ref{sec-num} we put that promise to the test.

\FloatBarrier
\section{Numerical Implementation}\label{sec-num}
In this section we implement the pricing PIDEs for plain vanilla call and put options and test the two approaches to the symbol method experimentally.

\begin{theorem}[Feynman-Kac]
\label{thrm:FeynmanKac}
Let $(L_t)_{t\geq 0}$ be a (time-homogeneous) L{\'e}vy process. 
Consider the PIDE
	\begin{equation}
	\label{eq:PIDE}
	\begin{split}
		\partial_t V^{C,P} + \mathcal{A} V^{C,P} =&\ 0,\qquad\text{for almost all $t\in(0,T)$}\\
		V^{C,P}(0) =&\ g^{C,P},
	\end{split}
	\end{equation}
where $\mathcal{A}$ is the operator associated with the symbol of $(L_t)_{t\geq 0}$. Assume further the assumptions (A1)--(A3) of \cite{EberleinGlau2012} to hold. Then~\eqref{eq:PIDE} possesses a unique weak solution
	\begin{equation}
		V^{C,P} \in W^1(0,T;H^{\alpha/2}_\eta(\IR^d), L^2_\eta(\IR^d))
	\end{equation}
where $\alpha>0$ is the Sobolev index of the symbol of $(L_t)_{t\geq 0}$ and $\eta\in\IR^d$ is chosen according to Theorem~6.1 in \cite{EberleinGlau2012}. If additionally $g^{C,P}_\eta\in L^1(\IR^d)$, then the relation
	\begin{equation}
		V^{C,P}(T-t,x)=\IE\left[g^{C,P}(L_{T-t}+x)\right]
	\end{equation}
holds for all $t\in [0,T]$, $x\in\IR^d$.
\end{theorem}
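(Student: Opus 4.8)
The plan is to derive the Feynman–Kac result essentially as a corollary of the variational well-posedness theory already stated in the excerpt together with the Fourier-analytic results on the symbol. First I would verify that the payoff functions $g^{C,P}$ of the call and put (in the logarithmic variable) lie in the weighted space $L^2_\eta(\IR^d)$ for a suitable choice of $\eta$: for the call $x\mapsto (S_0\mathrm{e}^x-K)^+$ one needs $\eta$ with components making $\mathrm{e}^{\langle\eta,x\rangle}(S_0\mathrm{e}^x-K)^+$ square integrable, which forces $\eta$ to lie in a half-space determined by the exponential moments of $L$; one then checks that this $\eta$ is compatible with the admissible range in Theorem~6.1 of \cite{EberleinGlau2012}. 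Having fixed $\eta$, the assumptions (A1)--(A3) in the weighted setting hold by the hypotheses of the theorem, so Theorem~23.A in \cite{Zeidler} (equivalently the remark after the variational definition in Section~\ref{sec-framework}) yields a unique variational solution $V^{C,P}\in W^1(0,T;H^{\alpha/2}_\eta(\IR^d),L^2_\eta(\IR^d))$ of \eqref{eq:PIDE}. This gives the first assertion directly.

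Next I would establish the stochastic representation. The natural route is to show that the function $(t,x)\mapsto u(t,x):=\IE[g^{C,P}(L_{T-t}+x)]$ is itself a weak solution of \eqref{eq:PIDE} and then invoke uniqueness. Concretely, under the extra assumption $g^{C,P}_\eta\in L^1(\IR^d)$ one has $\widehat{g^{C,P}}(\cdot-i\eta)$ well defined, and using the Lévy–Khintchine formula \eqref{eq-charPIIAC} one can write
\begin{equation*}
u(T-t,x)=\IE[g^{C,P}(L_t+x)]=\frac{1}{(2\pi)^d}\int_{\IR^d}\mathrm{e}^{-i\langle\xi-i\eta,x\rangle}\,\mathrm{e}^{-tA(\xi-i\eta)}\,\widehat{g^{C,P}}(\xi-i\eta)\,\mathrm{d}\xi,
\end{equation*}
by Fubini (justified by the $L^1$ assumption on $g^{C,P}_\eta$ together with $\Re A(\xi-i\eta)\ge0$ up to a shift, which follows from the Gårding inequality (A3) / the sectoriality of the symbol). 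Differentiating this Fourier representation in $t$ and pairing against a test function shows that $u$ satisfies the weak formulation \eqref{def-para} with $f=0$, and the initial condition $u(0,\cdot)=g^{C,P}$ holds in $L^2_\eta$ by continuity of $t\mapsto \mathrm{e}^{-tA(\cdot-i\eta)}$ at $t=0$ in the appropriate weighted sense. Hence $u$ coincides with the unique weak solution $V^{C,P}$, which is the claimed identity $V^{C,P}(T-t,x)=\IE[g^{C,P}(L_{T-t}+x)]$.

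The main obstacle I anticipate is the bookkeeping around the weight $\eta$: one must simultaneously (i) make the payoff $\eta$-integrable, (ii) keep $\eta$ inside the region where the exponential moment condition and the symbol growth bound of Lemma~\ref{lem:SymbolMethod} hold so that the bilinear form extends continuously to $H^{\alpha/2}_\eta$, and (iii) respect the sign constraints imposed by Theorem~6.1 in \cite{EberleinGlau2012}. For the call this is genuinely restrictive (it requires a nontrivial exponential moment of $L$, reflecting the well-known fact that the call payoff is handled by a damping parameter), whereas the put is essentially unconstrained. A secondary technical point is justifying the interchange of expectation and Fourier inversion and the differentiation under the integral sign; both are routine given $g^{C,P}_\eta\in L^1$ and the polynomial growth bound $|A(\xi-i\eta)|\le C_1(1+|\xi|)^\alpha$, but they should be stated carefully. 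Since the heavy lifting (existence/uniqueness in the variational framework, the Fourier representation of the bilinear form) is already available from the cited results and from Lemmas~\ref{lem:SymbolMethod}--\ref{lem-conv1}, the proof is mostly a matter of assembling these pieces and pinning down $\eta$.
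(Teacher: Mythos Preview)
Your proposal is correct and considerably more detailed than what the paper actually does: the paper's proof consists of a single sentence citing Theorem~6.1 in \cite{EberleinGlau2012} and deferring entirely to that reference. The argument you outline---checking that $g^{C,P}\in L^2_\eta$ for an admissible $\eta$, invoking the abstract well-posedness result for the variational problem, and then identifying the weak solution with the conditional expectation via a Fourier representation and uniqueness---is precisely the strategy underlying the cited theorem, so you are essentially reconstructing what the paper takes as a black box. Your discussion of the constraints on $\eta$ (exponential moments for the call, compatibility with the symbol growth condition) and of the justification for Fubini and differentiation under the integral sign is appropriate and matches the technical content of the reference; nothing in your sketch is misguided.
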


\begin{proof}
The result proved in \cite{EberleinGlau2012} and follows from Theorem~6.1 therein.
\end{proof}

\begin{remark}
Setting $g^{C,P} = g^C$ in \eqref{eq:PIDE}, the payoff profile of a European call option, results in $V^C$ being the price of a European call option. Analogously, setting $g^{C,P} = g^P$, the payoff profile of a European put option, results in $V^P$ being the price of a European call option.
\end{remark}

Algorithm~\ref{algo:SymbolMethod} summarizes the abstract structure of a general FEM solver based on the symbol method. By plugging the symbol associated to the model of choice into the computation of line~\ref{line:SymbolStiffness} of the algorithm, the solver instantly adapts to that model. In other words, only one line needs to be specified to obtain a model specific solver for option pricing. As Examples~\ref{ex:BSsymbol}, \ref{ex:defMertonsymbol}, \ref{ex:CGMYsymbol}, \ref{ex:NIGsymbol} and others emphasize, the symbol exists in analytically (semi--)closed form for many models, indeed. Algorithm~\ref{algo:SymbolMethod} thus provides a very appealing tool for FEM pricing in practice. 

\begin{algorithm}[ht!]
\caption{A symbol method based FEM solver}\label{algo:SymbolMethod}
\begin{algorithmic}[1]
\State Choose equidistant space grid $x_i$, $i=1,\dots,N$
\State Choose basis functions $\varphi_i$, $i=1,\dots,N$, with $\varphi_i(x) = \varphi_0(x-x_i)$ for some $\varphi_0$
\State Choose equidistant time grid $T_j$, $j=0,\dots,M$
\Procedure{Compute Mass Matrix $M$}{}
\State Derive the mass matrix $M\in\IR^{N\times N}$ by 
\State $\quad M_{kl} = \int_\IR \varphi_l(x)\varphi_k(x)\d{x},\qquad \forall k,l=1,\dots,N$
\EndProcedure
\Procedure{Compute Stiffness Matrix $A$}{}
\State \parbox[t]{\dimexpr\linewidth-\algorithmicindent}{Derive the stiffness matrix $A\in\IR^{N\times N}$ by plugging the symbol $A$ of the chosen model into the following formula and computing\strut}
\State $\quad A_{kl} = \frac{1}{2\pi}\int_\IR A(\xi)\, e^{i\xi(x_k-x_l)}\,|\widehat{\varphi_0}(\xi)|^2\,\d{\xi},\qquad \forall k,l=1,\dots,N$\label{line:SymbolStiffness}
\State using numerical integration
\EndProcedure
\Procedure{Run Theta Scheme}{}
\State \parbox[t]{\dimexpr\linewidth-\algorithmicindent}{
Choose a function $\psi$ to subtract from the original pricing problem to obtain a zero boundary problem and retrieve the respective basis function coefficient vectors $\overline{\psi}^k\in\IR^N$, $k=1,\dots,M$.
Consider the suggestions by Lemma~\ref{lem:SubtractingBSprices} or Lemma~\ref{lem:SubtractingQHSnormal} for plain vanilla European options below.\strut}
\State \parbox[t]{\dimexpr\linewidth-\algorithmicindent}{Choose an appropriate basis function coefficient vector $V^1\in\IR^N$ matching the initial condition of the transformed problem\strut}
\State \parbox[t]{\dimexpr\linewidth-\algorithmicindent}{Derive the right hand side vectors $F^k\in\IR^N$, $k=0,\dots,M$. Consult Lemma~\ref{lem:SubtractingBSprices} or Lemma~\ref{lem:SubtractingQHSnormal} matching the choice of $\psi$.\strut}
\State \parbox[t]{\dimexpr\linewidth-\algorithmicindent}{Choose $\theta\in[0,1]$ and run the iterative scheme\strut}
\State \textbf{for} $k=0:(M-1)$
\State $\quad\quad V^{k+1} = (M + \Dt\,\theta\,A)^{-1}\left(\left(M-\Dt\,(1-\theta)\,A\right)V^k + F^{k+\theta}\right)$
\State \textbf{end}
\EndProcedure
\Procedure{Reconstruct Solution to Original Problem}{}
\State \parbox[t]{\dimexpr\linewidth-\algorithmicindent}{Add previously subtracted right hand side $\psi$ to the solution of the transformed problem by computing\strut}
\State $\quad\widetilde{V}^k = V^k + \overline{\psi}^k,\qquad k=0,\dots,M$
\State \parbox[t]{\dimexpr\linewidth-\algorithmicindent}{to retrieve the basis function coefficient vectors $\widetilde{V}^k$, $k=0,\dots,M$, to the original pricing problem\strut}
\EndProcedure
\end{algorithmic}
\end{algorithm}



\subsection{Truncation to zero boundary conditions}
\label{susec:trunc}
As we derive prices of plain vanilla European call and put options, the solution to the respective pricing PIDE is defined on the whole real line. As a first step towards a discretization, we want to truncate the domain to bounded interval $(a,b)$ and we choose to implement zero boundary conditions. To do so, we follow the standard procedure to subtract an appropriate auxiliary function $\psi$. Having chosen $\psi$, the resulting modified problem for $\phi= V^{C,P}-\psi$ is


	\begin{equation}
	\label{eq:glrknpricingPIDEmodi}
	\begin{alignedat}{2}
		\partial_t \phi(t,x) + \left(\mathcal{A}\phi\right)(t,x) =&\ f(t,x),\qquad &&\forall(t,x)\in (0,T)\times \IR\\
		\phi(0,x) =&\ g_\Psi(x),\qquad &&\forall x\in\IR,
	\end{alignedat}
	\end{equation}
where $g_\Psi(x) = g(x)-\psi(0,x)$ for all $x\in\IR$ and the right hand side $f$ is given by
	\begin{equation*}
		f(t,x) := -\left(\partial_t\psi(t,x) + (\mathcal{A}\psi)(t,x)\right).
	\end{equation*}
The solution $V^{C,P}$ to the original problem~\eqref{eq:PIDE} can easily be restored by $V^{C,P}=\phi + \psi$. We establish the properties that $\psi$ needs to provide, later, where we will present some examples, as well.

The right hand side in vector notation is given by $F(t^k) =(F_1(t^k),\dots,F_N(t^k))\in \IR^N$ for each $t^k$ on the time grid with $F_j(\cdot)$, $j=1,\dots,N$, given by
	\begin{equation}
	\label{eq:rhsMerton01}
	\begin{split}
		F_j =&\ -\int_\mathbb{R}\left(\partial_t\psi(t,x)+(\mathcal{A}\psi)(t,x)+r\psi(t,x)\right)\varphi_j(x)\d{x}
	\end{split}
	\end{equation}
for all $j=1,\dots,N$.

We observe that the operator $\OA$ applied to the auxiliary function $\psi$ appears in the integral. For the same reasons as in the computation of the stiffness matrix entries, we decide to apply the symbol method for the computation of the entries of the right hand side $F\in\IR^N$. We pursue these considerations in following section.

\subsection{Computation of the right hand side $F$}
\label{subsec:RhsFourier}
First, we need to choose an appropriate auxiliary function $\psi$. As its purpose is to enable us to truncate the domain and insert zero boundary conditions, we need to inspect the limit behaviour of  
the price value 
	\begin{equation}	
	\label{eq:Ecallasymptotic}	
	\begin{alignedat}{2}	
		&V^C(x, t) \rightarrow 0,\qquad &x\rightarrow -\infty,\ t\in[0,T]\\
		&V^C(x, t) \rightarrow e^x - Ke^{-rt},\qquad &x\rightarrow +\infty,\ t\in[0,T]
	\end{alignedat}
	\end{equation}	
for call options and
	\begin{equation}	
	\label{eq:Eputasymptotic}
	\begin{alignedat}{2}	
		&V^P(x, t) \rightarrow Ke^{-rt}-e^x,\qquad &x\rightarrow -\infty,\ t\in[0,T]\\
		&V^P(x, t) \rightarrow 0,\qquad &x\rightarrow +\infty,\ t\in[0,T]
	\end{alignedat}
	\end{equation}
for put options.  This is the usual way to obtain the auxiliary function. Now, in regard to our specific approach, relying on the Fourier transforms, we identify additional desirable feature for the auxiliary function. We denote  $\widehat{\psi}(t,z):= \widehat{\psi(t,\cdot)}(z)$.
Consider a smooth function $\psi:[0,T]\times\IR\rightarrow\IR$ such that $\psi(t)\in H_\eta^{\alpha/2}(\IR)$ for all $t\in[0,T]$ for some $\eta\in\IR$. Then, for the second summand in~\eqref{eq:rhsMerton01} we have by applying the symbol method of Lemma~\ref{lem:SymbolMethod} that 
	\begin{equation}
	\label{eq:rhsMerton05}
	\begin{split}		
		\int_\IR(\mathcal{A}\psi)(t,x)\varphi_j(x)\d{x} =&\ \frac{1}{2\pi}\int_\IR A(\xi-i\eta)\widehat{\psi}(t,\xi-i\eta)\overline{\widehat{{\varphi_j}}(\xi+\eta)}\d{\xi},
	\end{split}
	\end{equation}
where $A$ denotes the symbol of the model. With the above identity, we are able to derive the right hand side $(F_j)_{j=1,\dots,N}$ of~\eqref{eq:rhsMerton01} in terms of Fourier transforms by
	\begin{equation*}
		F_j = -\frac{1}{2\pi}\int_\IR \left(\widehat{\partial_t\psi}(t,\xi-i\eta) +A(\xi-i\eta)\widehat{\psi}(t,\xi-i\eta)+r\widehat{\psi}(t,\xi-i\eta)\right)\overline{\widehat{{\varphi_j}}(\xi + \eta)}\d{\xi}.
	\end{equation*}
This shows that $\psi$ is numerically suitable for the purpose of localizing the pricing PIDE if
	 $\psi$ is quickly evaluable on the region $[a,b]\times[0,T]$ and the integrals determining $F_j$ 
			can be numerically evaluated fast for all $j=1,\dots,N$. The first feature allows retransforming the solution of the localized problem into the solution of the original pricing PIDE, while the second grants the fast numerical derivation of the right hand side in equation \eqref{eq:glrknpricingPIDEmodi}. These considerations lead us to the following list of desirable features for the auxiliary function $\psi$ that is required to obey the respective limit conditions \eqref{eq:Ecallasymptotic}, \eqref{eq:Eputasymptotic}:
\begin{enumerate}
\item a (semi-)closed expression of the function $\psi$, 
\item a (semi-)closed expression of its Fourier transform $\widehat{\psi}$
\item and fast decay of $|\widehat{\psi}(\xi)|$ and $|\widehat{\partial_t\psi}(\xi)|$ for $|\xi|\rightarrow\infty$.
\end{enumerate}
 The smoother $\psi$, the faster $|\widehat{\psi}|$ decays. 
We thus need different functions $\psi$ to subtract that not only fulfil the appropriate boundary conditions~\eqref{eq:Ecallasymptotic} or~\eqref{eq:Eputasymptotic} but that are also as smooth as possible.

In the following two subsections we will analyze two candidates for $\psi$ that display the desired features.



A first suggestion for $\psi$ consists in using Black-Scholes prices as functions in $x=\log(S_0)\in[a,b]$ and time to maturity $t\in[0,T]$ for localization of the pricing PIDE. We express the price of a European option with payoff profile $g^{C,P}$ in the Black-Scholes model in terms of (generalized) Fourier transforms and define $\psi$ accordingly, as the following Lemma explains.

\begin{lemma}[Subtracting Black-Scholes prices]
\label{lem:SubtractingBSprices}
Choose a Black-Scholes volatility $\sigma^2>0$. Define $\psi$ to be the associated Black-Scholes price,
	\begin{equation}
	\label{eq:bsrhsPsi01}
		\psi(t,x) = \psi^{\text{bs}}(t,x)=e^{-\eta x}e^{-rt}\frac{1}{2\pi}\int_\IR e^{i\xi x}{\widehat{g^{C,P}}(-(\xi+i\eta))}\varphi^{\text{bs}}_{t,\sigma}(\xi+i\eta)\d{\xi},
	\end{equation}
with $\varphi^{\text{bs}}_{t,\sigma}(z)  = \ee{t A^{bs}(z)}$. We denote $A$ the symbol of the associated operator $\mathcal{A}$, $r\geq0$ the prevailing risk-free interest rate and choose $\eta<-1$ and $\eta>0$ in the case of a call option and  $\eta$ for the put. Then the right hand side $F:[0,T]\rightarrow\IR^N$ computes to
	\begin{multline}
	\label{eq:FjBlackScholes}
		F_j(t) = \frac{1}{2\pi}\int_\IR\bigg(\left(A^{\text{bs}}-A\right)(\xi-i\eta)\bigg)\\ \widehat{g^{C,P}}(\xi-i\eta)\exp\left(-t\left(r+ A^{\text{bs}}(\xi-i\eta)\right)\right)\overline{\widehat{\varphi_j}(\xi+i\eta)}\d{\xi}
	\end{multline}
for all $j=1,\dots,N$.
\end{lemma}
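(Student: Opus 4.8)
The plan is to use that the subtracted function $\psi^{\text{bs}}$ is the Black--Scholes price of the option with payoff $g^{C,P}$, so that $\partial_t\psi^{\text{bs}}$ can be traded for the Black--Scholes operator applied to $\psi^{\text{bs}}$ and, after cancellation, only the difference of the model symbol $A$ and the Black--Scholes symbol $A^{\text{bs}}$ survives. First I would observe that \eqref{eq:bsrhsPsi01} is precisely the generalized Fourier inversion formula of the function with
\[
\widehat{\psi^{\text{bs}}(t,\cdot)}(\xi-i\eta)=\widehat{g^{C,P}}(\xi-i\eta)\,\exp\!\big(-t\,(r+A^{\text{bs}}(\xi-i\eta))\big),
\]
the prescribed range of the damping parameter $\eta$ in the statement ($\eta<-1$ for the call and $\eta>0$ for the put) being exactly what makes $\widehat{g^{C,P}}(\,\cdot-i\eta)$, and hence the integral in \eqref{eq:bsrhsPsi01}, well defined. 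Combined with the smoothness and the Gaussian tails of the Black--Scholes density, this range also yields $\psi^{\text{bs}}(t,\cdot)\in H^{\alpha/2}_\eta(\IR)$ for every $t\in[0,T]$, so that Lemma~\ref{lem:SymbolMethod} applies both to $\psi^{\text{bs}}(t,\cdot)$ and to every basis function $\varphi_j\in H^1_\eta(\IR)\subset H^{\alpha/2}_\eta(\IR)$. Differentiating the displayed identity in $t$ --- legitimate by dominated convergence, since the quadratic Gaussian factor in $\varphi^{\text{bs}}_{t,\sigma}$ dominates the polynomial factor $r+A^{\text{bs}}(\xi-i\eta)$ uniformly on compact $t$-intervals --- gives
\[
\widehat{\partial_t\psi^{\text{bs}}(t,\cdot)}(\xi-i\eta)=-\big(r+A^{\text{bs}}(\xi-i\eta)\big)\,\widehat{\psi^{\text{bs}}(t,\cdot)}(\xi-i\eta),
\]
which is just the Fourier transcription of the Black--Scholes equation $\partial_t\psi^{\text{bs}}+\mathcal{A}^{\text{bs}}\psi^{\text{bs}}+r\psi^{\text{bs}}=0$, with $\mathcal{A}^{\text{bs}}$ the operator with symbol $A^{\text{bs}}$.

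Next I would apply the symbol method of Lemma~\ref{lem:SymbolMethod} to the term $\int_\IR(\mathcal{A}\psi^{\text{bs}})(t,x)\varphi_j(x)\,\d{x}$, as in \eqref{eq:rhsMerton05}, and Parseval's identity in $L^2_\eta$ to the two remaining terms of \eqref{eq:rhsMerton01}, which produces the Fourier representation of the right hand side already recorded above the Lemma,
\begin{multline*}
F_j(t)=-\frac{1}{2\pi}\int_\IR\Big(\widehat{\partial_t\psi^{\text{bs}}}(t,\xi-i\eta)+A(\xi-i\eta)\,\widehat{\psi^{\text{bs}}}(t,\xi-i\eta)\\
+r\,\widehat{\psi^{\text{bs}}}(t,\xi-i\eta)\Big)\,\overline{\widehat{\varphi_j}(\xi+i\eta)}\,\d{\xi},
\end{multline*}
where $A$ is the symbol of the driving model. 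Substituting the expression for $\widehat{\partial_t\psi^{\text{bs}}}$ from the first step makes the two terms $\pm r\,\widehat{\psi^{\text{bs}}}$ cancel and collapses the bracket to $\big(A^{\text{bs}}-A\big)(\xi-i\eta)\,\widehat{\psi^{\text{bs}}}(t,\xi-i\eta)$; inserting the explicit form of $\widehat{\psi^{\text{bs}}}(t,\xi-i\eta)$ from the first step then yields exactly \eqref{eq:FjBlackScholes}.

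The algebraic core of the argument --- the cancellation of the two $r\,\widehat{\psi^{\text{bs}}}$ terms and of $A^{\text{bs}}$ against $\partial_t\psi^{\text{bs}}$ --- is a one-line manipulation. The substantive work sits entirely in the first step: checking that $\psi^{\text{bs}}(t,\cdot)$ really belongs to the weighted Sobolev--Slobodeckii space required by Lemma~\ref{lem:SymbolMethod}, that the generalized Fourier transform of \eqref{eq:bsrhsPsi01} may be computed by Fubini, and that one may differentiate under the integral in $t$. Each of these rests on the exponential moment hypotheses of Lemma~\ref{lem:SymbolMethod} together with the compatibility of $\eta$ with the at most exponential growth ($\sim e^{x}$) of the call/put payoff, which is precisely what the conditions on $\eta$ in the statement encode; beyond that, the only real nuisance is bookkeeping the exponential weights $e^{\pm\eta x}$ and the corresponding complex shifts $\xi\mapsto\xi\mp i\eta$ in the generalized Fourier transforms.
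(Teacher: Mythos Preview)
Your proposal is correct and follows essentially the same route as the paper: read off $\widehat{\psi^{\text{bs}}(t,\cdot)}(\xi-i\eta)$ from the inversion formula \eqref{eq:bsrhsPsi01}, differentiate in $t$ to get $\widehat{\partial_t\psi^{\text{bs}}}=-(r+A^{\text{bs}})\widehat{\psi^{\text{bs}}}$, apply Lemma~\ref{lem:SymbolMethod} to the $\mathcal{A}\psi^{\text{bs}}$ term, and collect. If anything, you are slightly more careful than the paper in flagging the analytic preconditions (membership in $H^{\alpha/2}_\eta$, dominated convergence for the $t$-derivative), which the paper leaves implicit.
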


\begin{proof}
In order to derive the right hand side, we need to represent $\psi$ in Fourier terms. Since for call and put options,
 $\psi\notin L^1(\IR)$, we compute the (generalized) Fourier transform of $\psi$ or the Fourier transform of $\psi_\eta = e ^{\eta\cdot}g$, respectively. We get
	\begin{equation}
	\label{eq:bsrhsPsi02}
	\begin{split}
		\psi_\eta(t, x) 
			=&\ e^{-rt}\frac{1}{2\pi}\int_\IR e^{-i\xi x}{\widehat{g^{C,P}}(\xi-i\eta)}\varphi^{\text{bs}}_{t,\sigma}(-(\xi-i\eta))\d{\xi}.
	\end{split}
	\end{equation}
The integral in~\eqref{eq:bsrhsPsi02} is a Fourier (inversion) integral. We read off
	\begin{equation}
	\label{eq:bsrhsPsi03}
	\begin{split}
		\widehat{\psi_\eta}(t,\xi) 
		=&\ \widehat{g^{C,P}}(\xi-i\eta)\exp\left(-t\left(r+ A^{\text{bs}}(\xi-i\eta)\right)\right),
	\end{split}
	\end{equation}
where we used the relation between the characteristic function and the symbol of a process. Now,
	\begin{equation}
	\label{eq:bsrhsPsi04}
	\begin{split}
		&\widehat{\frac{\partial}{\partial t}\psi_\eta}(t,\xi) 
		= -\left(r+ A^{\text{bs}}(\xi-i\eta)\right)\widehat{\psi_\eta}(t,\xi).
	\end{split}
	\end{equation}
Finally, since $\psi^{\text{bs}}\in H_\eta^{\alpha/2}(\IR)$, we have 
that
	\begin{equation}
	\label{eq:bsrhsPsi05}		
		\int_\IR(\mathcal{A}\psi^{\text{bs}})(t,x)\varphi_j(x)\d{x} = \frac{1}{2\pi}\int_\IR A(\xi-i\eta)\widehat{\psi^{\text{bs}}(t, \cdot)}(\xi-i\eta)\overline{\widehat{{\varphi_j}_{-\eta}}(\xi)}\d{\xi}.
	\end{equation}
Collecting our results proves the claim.
\end{proof}

The candidate $\psi=\psi^{\text{bs}}$ matches the desired criteria. It is quickly evaluable, since functions for yielding Black-Scholes prices are implemented in many code libraries. Also, the integral in~\eqref{eq:FjBlackScholes} is numerically accessible, since the integrand decays fast. Observe that FFT techniques could be employed to computed $F_j(t)$ for all $j=1,\dots,N$ simultaneously. A major disadvantage of choosing $\psi=\psi^{\text{bs}}$, however, lies in the fact that $t\in[0,T]$ can not be separated from the integrand in~\eqref{eq:FjBlackScholes}. Consequently, $F_j(t^k)$, $j=1,\dots,N$, $k=1,\dots,M$, must be numerically evaluated on each time grid node individually. This results in significant numerical cost. We therefore present a second candidate for $\psi$ that avoids this issue.


\begin{lemma}[Subtracting Quasi-Hockey stick multiplied by Normal]
\label{lem:SubtractingQHSnormal}
Let $\sigma_\psi>0$. Define $\psi^C$ in the call option and $\psi^P$ in the put option case by
	\begin{equation}
	\label{eq:cfrhsnorm00.1}
	\begin{alignedat}{2}
		\psi^C(t,x) =&\ \left(e^x-Ke^{-r t}\right)\Phi(x), &(t,x)\in[0,T]\times[a,b],\\
		\psi^P(t,x) =&\ \left(Ke^{-r t}-e^x\right)\left(1-\Phi(x)\right),\qquad &(t,x)\in[0,T]\times[a,b],
	\end{alignedat}
	\end{equation}
where $\Phi$ denotes the cumulative distribution function of the normal $\mathcal{N}(0,\sigma_\psi^2)$ distribution. Furthermore, in the call option case choose $\eta<-1$ and $\eta>0$ in the put option case. Then, the right hand side $F:[0,T]\rightarrow \IR^N$ computes to
	\begin{multline}
	\label{eq:subtractqhsnormal}
		F_j(t) = \frac{1}{2\pi}\Bigg(\int_\IR\big(A(\xi-i\eta)+r\big)\frac{\widehat{f^\mathcal{N}}(\xi-i(\eta+1))}{i\xi+(\eta+1)}\overline{\widehat{{\varphi_j}}(\xi+i\eta)}\d{\xi}\\ - e^{-r t}K\int_\IR\big(A(\xi-i\eta)\big)\frac{\widehat{f^\mathcal{N}}(\xi-i\eta)}{i\xi+\eta}\overline{\widehat{{\varphi_j}}(\xi+i\eta)}\d{\xi}\Bigg),
	\end{multline}
for all $j=1,\dots,N$ with $t\in[0,T]$, where $A$ is the symbol of the associated operator $\mathcal{A}$ and with
	\begin{equation*}
		\widehat{f^\mathcal{N}}(\xi) = \exp\left(-\frac{1}{2}\sigma^2_\psi\xi^2\right),
	\end{equation*}
the Fourier transform of the normal $\mathcal{N}(0,\sigma_\psi^2)$ density.
\end{lemma}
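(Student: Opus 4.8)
The plan is to feed the explicit functions $\psi^{C}$ and $\psi^{P}$ into the Fourier representation of the right-hand side already established in Section~\ref{subsec:RhsFourier}. There it was shown that, as soon as $\psi(t,\cdot)\in H_\eta^{\alpha/2}(\IR)$ so that Lemma~\ref{lem:SymbolMethod} and identity~\eqref{eq:rhsMerton05} apply, the components of $F$ are
	\begin{equation*}
		F_j(t) = -\frac{1}{2\pi}\int_\IR\Big(\widehat{\partial_t\psi}(t,\xi-i\eta) + \big(A(\xi-i\eta)+r\big)\widehat{\psi}(t,\xi-i\eta)\Big)\overline{\widehat{\varphi_j}(\xi+i\eta)}\,\d{\xi}.
	\end{equation*}
So the proof reduces to two tasks: (i) checking that $\psi^{C}(t,\cdot)\in H_\eta^{\alpha/2}(\IR)$ for $\eta<-1$ and $\psi^{P}(t,\cdot)\in H_\eta^{\alpha/2}(\IR)$ for $\eta>0$, and (ii) computing the generalized Fourier transforms $\widehat{\psi}(t,\xi-i\eta)$ and $\widehat{\partial_t\psi}(t,\xi-i\eta)$ in closed form.

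For (ii) I would split the auxiliary function in the way that keeps the exponentially weighted function integrable at both ends: for the call, $\psi^{C}(t,x) = e^{x}\Phi(x) - Ke^{-rt}\Phi(x)$; for the put, $\psi^{P}(t,x) = Ke^{-rt}\big(1-\Phi(x)\big) - e^{x}\big(1-\Phi(x)\big)$. The single computation behind everything is the identity, valid for every $a<0$,
	\begin{equation*}
		\int_\IR e^{i\xi x}e^{ax}\Phi(x)\,\d{x} = -\frac{\widehat{f^\mathcal{N}}(\xi-ia)}{i\xi+a},
	\end{equation*}
obtained by one integration by parts using $\Phi' = f^\mathcal{N}$ and $\int e^{i\xi x}e^{ax}f^\mathcal{N}(x)\,\d{x}=\widehat{f^\mathcal{N}}(\xi-ia)$; the boundary term at $+\infty$ vanishes since $e^{ax}\to 0$, and the one at $-\infty$ vanishes because the Gaussian tail of $\Phi$ dominates the exponential growth of $e^{ax}$. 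The analogous identity $\int_\IR e^{i\xi x}e^{ax}\big(1-\Phi(x)\big)\,\d{x} = \widehat{f^\mathcal{N}}(\xi-ia)/(i\xi+a)$ holds for $a>0$. Applying these with $a=\eta+1$ and $a=\eta$ (both negative in the call case since $\eta<-1$, both positive in the put case since $\eta>0$) yields, in \emph{both} cases, the same formulas
	\begin{equation*}
		\widehat{\psi}(t,\xi-i\eta) = -\frac{\widehat{f^\mathcal{N}}(\xi-i(\eta+1))}{i\xi+(\eta+1)} + Ke^{-rt}\frac{\widehat{f^\mathcal{N}}(\xi-i\eta)}{i\xi+\eta},\qquad \widehat{\partial_t\psi}(t,\xi-i\eta) = -rKe^{-rt}\frac{\widehat{f^\mathcal{N}}(\xi-i\eta)}{i\xi+\eta},
	\end{equation*}
where for the derivative one uses $\partial_t\psi^{C} = rKe^{-rt}\Phi$ and $\partial_t\psi^{P} = -rKe^{-rt}(1-\Phi)$. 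That the same pair of formulas serves call and put is exactly why the single formula~\eqref{eq:subtractqhsnormal} covers both.

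Substituting these into the displayed expression for $F_j(t)$ and collecting terms then finishes the proof: the contribution carrying $\widehat{f^\mathcal{N}}(\xi-i(\eta+1))$ picks up the factor $A(\xi-i\eta)+r$, while in the contribution carrying $Ke^{-rt}\widehat{f^\mathcal{N}}(\xi-i\eta)$ the $+r$ inside $A(\xi-i\eta)+r$ is exactly cancelled by the $-r$ coming from $\widehat{\partial_t\psi}$, leaving only $A(\xi-i\eta)$; the overall sign from the leading minus in $F_j$ then produces precisely~\eqref{eq:subtractqhsnormal}. Task (i) follows from the explicit transforms above: each summand of $\widehat{\psi}(t,\cdot-i\eta)$ is bounded by $C\,\lvert i\xi+a\rvert^{-1}e^{-\frac12\sigma_\psi^2(\xi^2-a^2)}$, which is smooth and Gaussian-decaying, so $(1+\lvert\xi\rvert)^{\alpha/2}\widehat{\psi}(t,\xi-i\eta)\in L^2(\IR)$; together with the exponential-moment bound on $F$ encoded in the admissible range of $\eta$, this makes Lemma~\ref{lem:SymbolMethod} applicable with the present $\eta$.

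The main obstacle, and the only point needing genuine care, is the weighting: $\psi^{C}$ and $\psi^{P}$ are not in $L^1(\IR)$, and neither are their naive term-by-term splittings, so one must use the decomposition matched to the sign of $\eta$ ($\Phi$ for the call, $1-\Phi$ for the put) and then justify that the boundary terms in the integration by parts vanish — which is exactly where the Gaussian tail of the normal distribution function does the work. Everything after the two Fourier-transform identities, in particular the cancellation of the $r$-terms, is routine algebra.
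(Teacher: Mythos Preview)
Your proposal is correct and follows essentially the same route as the paper: compute the weighted Fourier transform of $\psi$ by splitting off the $e^{x}$- and $Ke^{-rt}$-parts and applying one integration by parts against $\Phi$ (respectively $1-\Phi$), then substitute into the Fourier representation of $F_j$ from Section~\ref{subsec:RhsFourier}. You are in fact more explicit than the paper about the cancellation of the $r$-term in the $Ke^{-rt}$-contribution and about the $H_\eta^{\alpha/2}$-membership, while the paper handles the put case by rewriting $\psi^P=(e^x-Ke^{-rt})(\Phi-1)$ and invoking $(\Phi-1)'=\Phi'$ rather than your direct $(1-\Phi)$-identity; these are cosmetic differences only.
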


\begin{proof}
We consider the call option case first. To derive the expression for $F_j$ in~\eqref{eq:subtractqhsnormal} we need to compute the Fourier transform of (the appropriately weighted) $\psi^C$. We choose $\eta<-1$ arbitrary but fix and $t\in[0,T]$ arbitrary but fix and compute for $K=1$,
	\begin{equation}
	\label{eq:cfrhsnorm01}
	\begin{split}
		\widehat{\psi_\eta^C(t,\cdot)}(\xi) =&\ \int_\IR e^{i\xi x}e^{\eta x}\left(e^x-e^{-r t}\right)\Phi(x)\d{x}\\
		=&\ \int_\IR e^{i\xi x}e^{(\eta+1) x}\Phi(x)\d x - e^{-r t}\int_\IR e^{i\xi x}e^{\eta x}\Phi(x)\d{x}.
	\end{split}
	\end{equation}
Integration by parts and \lHopital 's rule that
	\begin{equation}
	\label{eq:cfrhsnorm04}
	\begin{split}	
		\int_\IR e^{i\xi x}e^{(\eta+1) x}\Phi(x)\d x =&\ - \frac{1}{i\xi+(\eta+1)}\int_\IR e^{i(\xi-i(\eta+1))x}f^\mathcal{N}(x)\d x,
	\end{split}
	\end{equation}
which can be expressed in terms of the Fourier transform of the normal distribution yielding
	\begin{equation}
	\label{eq:cfrhsnorm05}
	\begin{split}	
		\int_\IR e^{i\xi x}e^{(\eta+1) x}\Phi(x)\d{x} =&\ - \frac{\widehat{f^\mathcal{N}}(\xi-i(\eta+1))}{i\xi+(\eta+1)}.
	\end{split}
	\end{equation}
Equivalently, we obtain for the second integral in~\eqref{eq:cfrhsnorm01} that
	\begin{equation}
	\label{eq:cfrhsnorm06}
	\begin{split}	
		\int_\IR e^{i\xi x}e^{\eta x}\Phi(x)\d x =&\ - \frac{\widehat{f^\mathcal{N}}(\xi-i\eta)}{i\xi+\eta}.
	\end{split}
	\end{equation}
Assembling these results we find
	\begin{equation}
	\label{eq:cfrhsnorm07}
	\begin{split}
		\widehat{\psi_\eta^C(t,\cdot)}(\xi) =&\ -\frac{\widehat{f^\mathcal{N}}(\xi-i(\eta+1))}{i\xi+(\eta+1)} +  e^{-r t}\frac{\widehat{f^\mathcal{N}}(\xi-i\eta)}{i\xi+\eta}.
	\end{split}
	\end{equation}
	We deduce
	\begin{multline}
	\label{eq:cfrhsnorm11}
		F_j(t) = \frac{1}{2\pi}\Bigg(\int_\IR\big(A(\xi-i\eta)+r\big)\frac{\widehat{f^\mathcal{N}}(\xi-i(\eta+1))}{i\xi+(\eta+1)}\overline{\widehat{{\varphi_j}}(\xi+i\eta)}\d{\xi}\\
		- e^{-r t}\int_\IR A(\xi-i\eta)\frac{\widehat{f^\mathcal{N}}(\xi-i\eta)}{i\xi+\eta}\overline{\widehat{{\varphi_j}}(\xi+i\eta)}\d{\xi}\Bigg)
	\end{multline}
with
	\begin{equation*}
		\widehat{f^\mathcal{N}}(\xi) = \exp\left(-\frac{1}{2}\sigma^2_\psi\xi^2\right).
	\end{equation*}
For the put option case we choose as defined in~\eqref{eq:cfrhsnorm00.1},
	\begin{equation}
	\label{eq:cfrhsnorm12}
	\begin{split}
		\psi^P(x, t) =&\ \left(Ke^{-r t}-e^x\right)\left(1-\Phi(x)\right)\\
			=&\ \left(e^x - Ke^{-r t}\right)\left(\Phi(x)-1\right).
	\end{split}
	\end{equation}
Since
	\begin{equation}
		\frac{\partial}{\partial x}\left(\Phi(x)-1\right) = \frac{\partial}{\partial x}\Phi(x),\qquad\forall x\in\IR,
	\end{equation}
the computations for $\widehat{\psi^P_\eta}$ follow along the same lines as they do for the call and we get the relation
	\begin{equation}
		\widehat{\psi^P_{\eta}(t,\cdot)}(\xi) = \widehat{\psi^C_{\eta}(t,\cdot)}(\xi),\qquad \forall (t,\xi)\in [0,T]\times\IR,
	\end{equation}
for $\eta$ set to some $\eta>0$, which proves the claim.
\end{proof}

\begin{remark}[Computational features of $\psi^C$ and $\psi^P$]
While $\psi^C$ serves as localizing function for the call option case, $\psi^P$ can be used in the put option case. Both candidates are based on the payoff functions of call and put options but avoid the lack of differentiability with respect to $x$ in $x=\log(Ke^{-rt})$ for $t\in[0,T]$. As a consequence, both $\psi^C$ and $\psi^P$ are very smooth functions and thus fulfil the requirements collected above when $\sigma_\psi$ is chosen small enough. 
Additionally, the two integrals in~\eqref{eq:subtractqhsnormal} do not depend on the time variable $t\in[0,T]$ and thus need to be computed only once for each basis function $\varphi_j$. This results in a significant acceleration in computational time compared to the suggestion $\psi=\psi^{\text{bs}}$ of Lemma~\ref{lem:SubtractingBSprices}.
\end{remark}

\subsection{Empirical Convergence Results}

The previous sections have outlined the necessary consecutive phases in setting up a Finite Element solver for option pricing. In order to obtain a flexible FEM solver for pricing PIDEs in L\'evy models, the previous section 
introduced the symbol method which considers all components of the FEM solver in Fourier space. There, components are based on the symbol instead of the L{\'e}vy measure and become numerically accessible. Many examples of asset models for which the associated symbols exist in analytically closed form have deemed this alternative approach being worthwhile to pursue. At the same time, however, smoothness of the FEM basis functions became a critical issue. 
In a second step, we therefore investigated two approaches to combine smoothness and numerical accessibility. Mollified hat functions and splines were introduced as promising examples to construct a symbol method based FEM solver with.

This section will put that promise to the test. 
We therefore implemented the symbol method for both mollified hats and splines. 
 In stark contrast to a direct implementation of \eqref{eq-ajump} and \eqref{eq:rhsMerton01}, the symbol method enjoys the flexibility of being able to easily plug in the symbol of any L{\'e}vy model for which it is available in analytically closed form. The model restriction of that first approach thus disappears. Having first implemented the method for the Merton model, the extension of the code to the NIG and the CGMY model comes with virtually no additional effort. In this regard, the method impressively underlines its appeal for applications in practice where the suitability of a model might depend on the asset class it is employed for. An institution that needs to maintain pricing routines for several asset classes will thus cherish the flexibility that the symbol method offers, recall Algorithm~\ref{algo:SymbolMethod} in this regard which sketches the implementation of a general, symbol method based FEM solver that easily adapts to various models.

Finally, we conduct an empirical order of convergence study. We consider the univariate Merton, CGMY and NIG model and investigate the empirical rates of convergence for the different implementations as Table~\ref{tab:EOCModelsBasisFcns} summarizes.
\begin{table}[h!]
\begin{center}
\makebox[0pt]{\begin{tabular}{@{}lcllcc@{}}
\toprule\vspace{1ex}
\multirow{2}{*}{\textbf{Model}} & \multirow{2}{*}{\textbf{Symbol}} & \multicolumn{2}{c}{\multirow{2}{*}{\textbf{Parameter choices}}} & \multicolumn{2}{c}{\textbf{Implemented basis functions}} \\
                    &  & &        & Mollified hats      & Splines      \\
\hline\\
\multirow{2}{*}{Merton}              & \multirow{2}{*}{Example~\ref{ex:defMertonsymbol}} & $\sigma=0.15$, & $\alpha=-0.04$, &   \multirow{2}{*}{\checkmark}    & \multirow{2}{*}{\checkmark}  \\
 & & $\beta = 0.2$, & $\lambda=3$  & & \vspace{2ex}\\
\multirow{2}{*}{CGMY}                & \multirow{2}{*}{Example~\ref{ex:CGMYsymbol}} & $C=0.5$, & $G=23.78$,    &     \multirow{2}{*}{\checkmark}  &  \multirow{2}{*}{\checkmark} \\     
& & $M=27.24$,& $Y=1.1$ & &  \vspace{2ex}\\
\multirow{2}{*}{NIG}                 & \multirow{2}{*}{Example~\ref{ex:NIGsymbol}}  & $\alpha=12.26$, & $\beta=-5.77$, &        \multirow{2}{*}{\checkmark}     & \multirow{2}{*}{\checkmark}  \\
& & $\delta=0.52$ & & & \\
\bottomrule\\
\end{tabular}}
\caption{An overview of the models considered in the empirical order of convergence analysis and their parametrization. For these models, the symbol method is implemented and tested for both mollified hat functions and splines. In addition, we investigate the empirical convergence rate for the Merton model using classic hat functions as basis functions in a classic implementation disregarding the symbol method. In all models, the constant risk-less interest rate has been set to $r=0.03$.}
\label{tab:EOCModelsBasisFcns}
\end{center}
\end{table}

For each model and each implemented basis function type enlisted in Table~\ref{tab:EOCModelsBasisFcns} we conduct an empirical order of convergence study using the pricing problem of a call option with strike $K=1$ as an example, thus considering the payoff function
	\begin{equation}
	\label{eq:EOCPayoff}
		g(x) = \max(e^x-1,0).	
	\end{equation}
In each study we compute FEM prices for $N_k$ basis functions, with
	\begin{equation}
		N_k = 1 + 2^k,\qquad k=4,\dots,9
	\end{equation}
resulting in  $N_4=17$ basis functions in the most coarse and $N_9 = 513$ basis functions in the most granular case. On each grid, the nodes that basis functions are associated with are equidistantly spaced from another and the basis functions always span the space interval $\Omega = [-5,5]$. The time discretization is kept constant with $N_\text{time grid}=2000$ equidistantly spaced time nodes spanning a grid range of two years up until maturity, thus covering a time to maturity interval of
	\begin{equation}
		[T_1,T_{N_\text{time}}],\qquad \text{ with  $T_1=0$ and $T_{N_\text{time}}=2$}.
	\end{equation}
For each $k=4,\dots,9$, the resulting price surface constructed by $N_k$ basis functions in space and $N_\text{time}=2000$ grid points in time is computed. A comparison of these surfaces is drawn to a price surface of most granular structure based on the same type of basis function. We call this most granular surface \emph{true} price surface. It rests on $N_\text{true} = N_{11} = 1+2^{11} = 2049$ basis functions in space and $N_\text{time}$ grid points in time spanning the same grid intervals as above, that is $\Omega = [-5,5]$ in space and $[0,2]$ in time, respectively. The underlying FEM implementation is thus based on distances $h_\text{true}$ between grid nodes that basis function are associated with of
	\begin{equation}
	\begin{split}
		h^\text{mollified hat}_\text{true} =&\ (5-(-5))/(2+2^{11}) \approx 0.0049,\\
		h^\text{splines}_\text{true} =&\ (5-(-5))/(4+2^{11}) \approx 0.0049,\\
		\Delta t_\text{true} =&\ 2/(2000-1) \approx 0.001
	\end{split}
	\end{equation}
in space and time, respectively. Note that all space grids are designed in such a way that the log-strike $\log(K) = 0$ is one of the space nodes. For each model and method and each $k=4,\dots,9$, the (discrete) $L^2$ error $\varepsilon_{L^2}$ is calculated as
	\begin{equation*}
		\varepsilon_{L^2}(k) = \sqrt{\Delta t_\text{true}\cdot h_\text{true}\cdot\sum_{i=1}^{N_\text{time}}\sum_{j=1}^{N_\text{true}}\Big(Price_\text{true}(i,j)-Price_k(i,j)\Big)^2},
	\end{equation*}
wherein $Price_\text{true}(i,j)$ is the value of the true pricing surface at space node $j\in \{1,\dots,1+2^{11}\}$ and time node $i=1,\dots,2000$ and $Price_k(i,j)$ is the respective, linearly interpolated value of the coarser pricing surface with only $N_k$ basis function nodes. 

Figure~\ref{fig:EOCanalysisMHatSpline} summarizes the results of the six studies of empirical order of convergence in the Merton, the NIG and the CGMY model in a symbol based implementation once using mollified hats and once using splines as basis functions.
\begin{figure}[t]
\begin{center}
\makebox[0pt]{\includegraphics[scale=1]{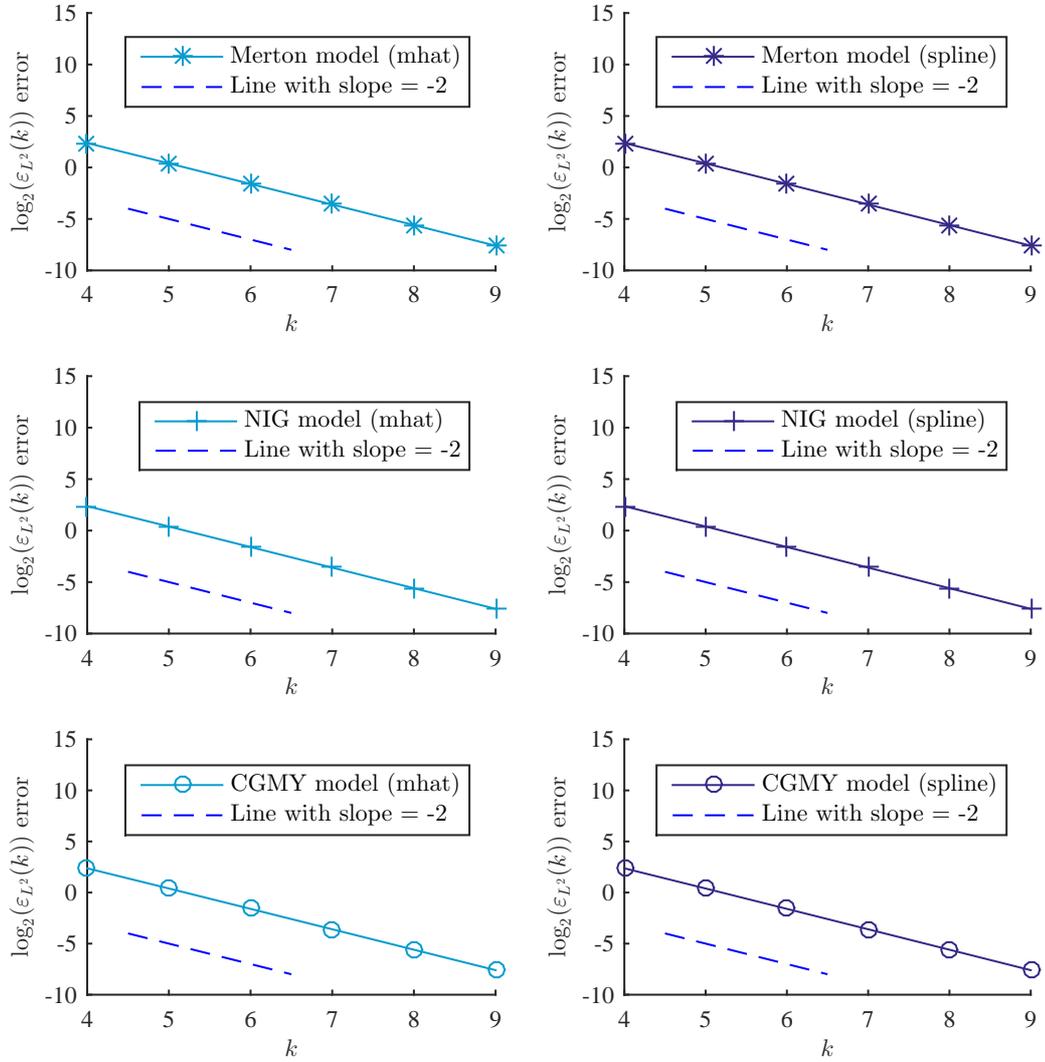}}
\caption{Results of the empirical order of convergence study for the Merton, the NIG and the CGMY model using mollified hats (left pictures) and splines (right pictures) as basis functions. All models are parametrized as stated in Table~\ref{tab:EOCModelsBasisFcns}. Additionally, part of a straight line with (absolute) slope of $2$ is depicted in each figure serving as a comparison.}
\label{fig:EOCanalysisMHatSpline}
\end{center}
\end{figure}
In each implementation and for all considered models, the (discrete) $L^2$ error decays exponentially with rate $2$. 
The convergence result of Theorem 5.4 by \cite{SchwabWavelet03} suggest that this is the best possible rate we can hope for, which yields the experimental validation of both approaches.
\FloatBarrier

\section{Conclusion and outlook}\label{sec-outlook}

We have presented a finite element solver that is highly flexible in the model choice and that maintains numerical feasibility. Invoking the symbol was key. The transition into Fourier space has introduced smoothness as a new requirement to the basis functions. We have presented splines and mollified hats as compatible basis functions in our approach. Several numerical examples have confirmed the convergences rates expected by standard theory in both cases. The contribution of the mollification to the error has not yet been addressed in the literature. One possibility to derive such error estimates lies in adapting the perturbation analysis of~\cite{SchwabWavelet03} treating mollification as a numerical perturbation of the stiffness matrix. Let us mention several possible extensions of the approach. First, the implementation naturally extends to time-inhomogeneous L{\'e}vy models that we neglected here for notational convenience. Second, combining the symbol method with Wavelet basis functions allows for compression techniques that might further improve the overall numerical performance, as \cite{HilberReichmannSchwabWinter2013} outlines. Third, the polynomial decay that we observe in our numerical experiments can possibly be improved to exponential rates by invoking an $hp$-discontinuous Galerkin scheme, see e.g.~\cite{SchoetzauSchwab2006}.



\appendix


\section{Proofs}
\subsection{Proof of Lemma \ref{lem-a=forH1} }\label{sec-proof-lem-a=}
\begin{proof}
We first consider $\varphi,\psi\in C^\infty_0(\rr)$.

For $F\equiv0$ the assertion follows directly from partial integration. Since the L\'evy measure may be unbounded around the origin, the representation of the  jump part of the bilinear form,
\begin{align*}
a^{jump}(\varphi,\psi ) 
:= 
&-\int_{\rr}\int_{\rr}\Big( \varphi(x+y)-\varphi(x) -  \varphi'(x) \,  h(y)\Big)F(\dd y)\psi(x) \ee{2\skl\eta,x\skr}\dd x,
\end{align*}
needs to be carefully derived.
In order to exploit the identity
\begin{align*}
\varphi(x+y) - \varphi(x) - y\varphi'(x) = \int_0^y\int_0^z \varphi''(v)\dd v \dd z
\end{align*}
we split the integral with respect to the L\'evy measure in  three parts, set $c(F):= \int_{|y|<1}\big( y-h(y)\big)F(\dd y) - \int_{|y|>1}h(y)F(\dd y)$ and obtain
\begin{align*}
a^{jump}(\varphi,\psi ) 
:= 
&-\int_{\rr}\int_{|y|<1} \int_0^y\int_0^z \varphi''(x+v)\dd v \dd z F(\dd y)\psi(x) \ee{2\skl\eta,x\skr}\dd x\\
 &- c(F)\int_{\rr}\varphi'(x)\psi(x) \ee{2\skl\eta,x\skr}\dd x
\\
&-\int_{\rr}\int_{|y|>1}\big( \varphi(x+y)-\varphi(x)\big)F(\dd y)\psi(x) \ee{2\skl\eta,x\skr}\dd x.
\end{align*}
Thanks to $\int_0^y\int_0^z \big|\varphi''(v)\big|\dd v \dd z\le c y^2$ with some constant $c>0$ for all $y\in[-1,1]$and
\begin{align}\label{absch-ajump-cont-H1}
\begin{split}
\lefteqn{
\int_{\rr}\int_{|y|<1} \int_0^y\int_0^z\big|\varphi'(x+v)\big|\dd v \dd zF(\dd y)\big|\psi'(x) + 2\eta \psi(x) \big|\ee{2\skl\eta,x\skr}\dd x
}\qquad\qquad\qquad\qquad\qquad\qquad\\
&\le
(1+2\eta)\|\varphi\|_{H^1_\eta} \|\psi\|_{H^1_\eta}\int_{|y|<1}y^2 F(\dd y)
\end{split}
\end{align}

 we can apply the theorem of Fubini and partial integration to obtain
\begin{align*}
\lefteqn{-\int_{\rr}\int_{|y|<1} \int_0^y\int_0^z \varphi''(x+v)\dd v\dd z F(\dd y)\psi(x) \ee{2\skl\eta,x\skr}\dd x}\qquad\qquad\\
&=
\int_{\rr}\int_{|y|<1} \int_0^y\int_0^z \varphi'(x+v)\dd vF(\dd y)\big(\psi'(x) + 2\eta \psi(x) \big)\ee{2\skl\eta,x\skr}\dd x.
\end{align*}
This yields the assertion for $\varphi,\psi\in C^\infty_0(\rr)$.

Next, we verify that the bilinear form as stated in Lemma \ref{lem-a=forH1} is well defined for $\varphi,\psi\in H^1_\eta(\rr)$ and is continuous with respect to the norm of $H^1_\eta(\rr)$. For $F\equiv0$ this is obvious. The assertion follows for the jump part from inequality \eqref{absch-ajump-cont-H1}
and 
\begin{align*}
\int_{\rr}\int_{|y|>1}\big| \varphi(x+y)-\varphi(x)\big|F(\dd y)\big|\psi(x)\big| \ee{2\skl\eta,x\skr}\dd x
&\le 2 F\big(\rr\setminus[-1,1]\big)\|\varphi\|_{L^2_\eta} \|\psi\|_{L^2_\eta}.
\end{align*}
Thus $a$ from Lemma \ref{lem-a=forH1} is a continuous bilinear form on $H^1_\eta(\rr)\times H^1_\eta(\rr)$ that coincides with \eqref{def-aCinfty0-eta} on the dense subset $C^\infty_0(\rr)\times C^\infty_0(\rr)$. This proves the assertion.
\end{proof}

\subsection{Proof of Lemma \ref{lem-conv1}}\label{sec-proof-lem-conv1}

\begin{proof}
To prove the assertion, we verify the conditions of  Lemma 3 in \cite{Glau2016}, which provides an abstract robustness result for weak solutions. We first observe that the conditions for $f_n,f,g_n,g$ coincide with those of Lemma 3 in \cite{Glau2016}. Second, we verify conditions (An1)--(An3) of Lemma 3 in \cite{Glau2016}. Therefore we assign to each $u,v\in X$ the coefficients $\alpha_k(u),\alpha_k(v)\in \rr$ for $k\le N$ such that $u=\sum_{k=1}^N\alpha_k(u)\varphi_k$ and  $v=\sum_{k=1}^N\alpha_k(v)\varphi_k$. Thanks to the finite dimensionsionality of $X$, there exists a constant $\widetilde{C}>0$ such that for all $u\in X$,
\begin{equation}\label{eq-normequiv}
\|u\|_V \le \sum_{k=1}^N\big|\alpha_k(u)\big|\|u\|_V\le C' \|u\|_V .
\end{equation}
Thanks to \eqref{cond-an} there exists a sequence $0<c_n\to 0$ such hat for all $j,k\le N$,
\begin{align}
\big|(a_n-a)(\varphi_j,\varphi_k)\big|
&\le c_n\|\varphi_j\|_V\|\varphi_k\|_{V}.\label{cond-an2}
\end{align}
Together with assumption (A2) this yields for all $j,k\le N$,
\begin{align}\label{cond-an1}
\big|a_n(\varphi_j,\varphi_k)\big|
&\le C_1\|\varphi_k\|_{V}\|\varphi_k\|_{V}.
\end{align}
Inequalities \eqref{cond-an1} and \eqref{eq-normequiv} together yield for all $u,v\in X$,
\begin{align*}
\big|a_n(u,v)\big|
&\le 
\sum_{k=1}^N \sum_{j=1}^N\big|\alpha_k(u)\alpha_j(v)\big| \big|a_n(\varphi_k,\varphi_j)\big|\\
&\le
C_1\sum_{k=1}^N \sum_{j=1}^N\big|\alpha_k(u)\alpha_j(u)\big| \|\varphi_k\|_{V}\|\varphi_k\|_{V}\\
&\le
C_1\widetilde{C}^2 \|u\|_V \|v\|_V,
\end{align*}
which shows that condition (An1) of Lemma 3 in \cite{Glau2016} is satisfied. 

Due to inequalities 
\eqref{cond-an2} and \eqref{eq-normequiv}, we have for all $u\in X$,
\begin{align*}
\big|(a-a_n)(u,u)\big|
&\le
\sum_{k=1}^N \sum_{j=1}^N\big|\alpha_k(u)\alpha_j(u)\big| \big|a_n(\varphi_k,\varphi_j)\big|\\
&\le 
c_n \sum_{k=1}^N \sum_{j=1}^N\big|\alpha_k(u) \alpha_j(u)\big| \|\varphi_j\|_V\|\varphi_k\|_{V}\\
&
\le c_n \widetilde{C}^2 \|u\|_V^2,
\end{align*}
which shows assumption (An3) of Lemma 3 in \cite{Glau2016}.

Finally, from assumption (A1) and the last inequality for all $u\in X$ we obtain
\begin{align*}
a_n(u,u)
&\ge
a(u,u) - \big|(a-a_n)(u,u)\big|\\ 
&\ge
G\|u\|_V^2- G'\|u\|_H^2
- c_n \widetilde{C}^2 \|u\|_V^2,
\end{align*}
which shows that there exists $N_0\in \nn$ such that condition (An2) of Lemma 3 in \cite{Glau2016} is satisfied for all $n>N_0$. This shows the assertion of Lemma \ref{lem-conv1}.
\end{proof}

\vspace{4ex}

\bibliographystyle{elsarticle-harv}
  \bibliography{LiteraturFKac}

\end{document}